\theoremstyle{plain}
\newtheorem{thm}{\protect\theoremname}
  \theoremstyle{plain}
  \newtheorem{prop}[thm]{\protect\propositionname}
  \theoremstyle{plain}
  \newtheorem{lem}[thm]{\protect\lemmaname}
  \providecommand{\lemmaname}{Lemma}
  \providecommand{\theoremname}{Theorem}
  \providecommand{\propositionname}{Proposition}
\title{Volume tensor of pheasant brain compartments estimated by Fakir probe}
\shorttitle{Tensor by Fakir}
\shortauthors{Janacek J \etal}
\author[1]{Ji\v{r}\'{\i} Jan\'a\v{c}ek}
\author[2,3]{Daniel Jir\'ak} 
\email{jiri.janacek@fgu.cas.cz,
daniel.jirak@ikem.cz}
\affiliation[1]{Department of Biomathematics, Institute of Physiology, The Czech Academy of Sciences, V\'{\i}de\v nsk\'a 1083, 142 20 Prague, Czech Republic}
\affiliation[2]{MR Unit, Department of Diagnostic and Interventional Radiology, Institute for Clinical and Experimental Medicine, V\'{\i}de\v nsk\'a 1958/9, 140 21 Prague, Czech Republic}
\affiliation[3]{Institute of Biophysics and Informatics, 1st Medicine Faculty, Charles University, Prague, Czech Republic }
\abstract{The volume tensor provides robust estimate of object shape and orientation in space. The tensor is estimated from 3D data set by the Fakir probe, an interactive method using intersections of the objects boundary with a virtual lines. The method thus can be applied to objects that cannot be segmented automatically. Marking the intersections instead of segmenting the whole object reduces the workload required for obtaining sufficiently precise results. We present theoretical results on the variance of estimate of integrals by systematic sampling that enable calculation of the shape estimate precision. To demonstrate the ability of Fakir technique, we measure the changes in shape and orientation  of pheasant brain compartments during development.}
\keywords{Minkowski tensor, magnetic resonance imaging, Fakir probe, bird brain}
\begin{document}
\begin{paper}

\section{Introduction}

The centered second moment volume tensor can be used for robust characterization of objects shape and orientation by an equivalent ellipsoid \citep{ziegel:2015}. Current estimators of the tensors use either curvatures of triangulated surface of the object segmented from given 3D data \citep{schroederturk:2011} or stereological approach that requires random sections of the object \citep{rafati:2016}. Intersecting the object with the spatial grid, that is sparse in comparison with the lattice of voxels, we obtain sufficiently precise condensed information about the shape of the object. This approach, designed for measurement of surface area \citep{barbier:1860,cruz-orive:1997,kubinova:1998} was implemented in the Fakir method for estimation of the surface area and volume of 3D objects in data volumes obtained by 3D imaging modalities such as confocal microscopy or Computed Tomography (CT) \citep{kubinova:1999}. The principle of the estimators is based on integral geometry \citep{santalo:1976}, namely 1) the mean number of intersections of the object boundary with the grid is in direct proportion to the surface area of the object, and 2) the mean length of the grid lines inside the objects is in direct proportion to the volume of the object; where the mean is assessed with respect to the random position of the grid. The coefficients of proportionality are the products of the grid length density ($m^{-2}$) and constants, they equal to 1 and 1/2, respectively.

Variance of the estimators using randomly oriented grids can be estimated from the grid density and properties of the measured objects using methods originated from Matherons transitive theory \citep{matheron:1965}. An example is the variance of volume estimator of 3D objects using spatial grid with isotropic uniform random (IUR) position, that can be efficiently estimated from the asymptotic term proportional to the surface area of the object divided by square of the length density as the density tends to infinity \citep{janacek:2006,janacek:2008} with a constant characterizing the grid. 

Optimal choice of the line grid in Fakir method further reduces the workload. One particularly efficient grid of lines, can be found in the crystal of garnet with cubic grid where the atoms are aligned along lines in seven directions, three of them orthogonal and the other four along diagonals of the cube \citep{okeefe:1992}.

The aim of this paper is to demonstrate, that Fakir method can be used for numerical integration in evaluation of the volume tensor and that lengths and directions of semi-axes of the equivalent ellipsoid can be used for robust estimation of object shape and orientation. The precision of the semi-axes  estimated by the Fakir method depends on the spatial grid density and the object properties and we present approximate formula for its calculation. Moreover, we provide proof of asymptotic formula for variance of estimator of integrals over objects with finite perimeter \citep{ziemer:1989}.

The Fakir method was applied in developmental study of pheasant brain compartments \citep{jirak:2015} in order to estimate the changes of the bird brain divisions shapes with ontogeny. Here we apply our approach to the task of assessment of changes in shape and orientation of phaesant brain compartments during development.

%%%%%%%%%%%%%%%%%%%%%%%%%%%%%%%%%%%%%%%%%%%%%%%%%%%%%%%%%%%%%%%%%%%%%%%%%%%%%%%
\section{Material and Methods}
\subsection{Estimate of the second moment volume tensor using line segments} 

Let $K$ be a measurable subset of 3-dimensional Euclidean space $\mathbf{\mathbb{R}}^{3}$. The elements of the second moment volume Minkowski tensor $\Phi_{3,2,0}\left(K\right)=\left\{ \phi_{i,j}\right\} $
are defined by the integrals \citep{hug:2008}: 
\begin{equation}
\phi_{i,j}=\frac{1}{2}\iiintop_{K}x_{i}x_{j}dx_{1}dx_{2}dx_{3},\;i,j=1\ldots3.\label{eq:defint}
\end{equation}

We combine the $\Phi_{3,2,0}$ with center of mass and volume of the
object to obtain translation invariant centered second moment volume
tensor $\Theta\left(K\right)=\left\{ \tau_{i,j}\right\}$ \citep{ziegel:2015}:
\begin{equation}
\tau_{i,j}=\frac{2\phi_{i,j}}{V}-c_{i}c_{j},\;i,j=1\ldots3,\label{eq:center}
\end{equation}
where
\begin{equation}
c_{i}=\frac{m_{i}}{V},\;m_{i}=\iiintop_{K}x_{i}dx_{1}dx_{2}dx_{3},\;i=1\ldots3\label{eq:momint}
\end{equation}
are coordinates of centre of mass and
\begin{equation}
V=\iiintop_{K}dx_{1}dx_{2}dx_{3}\label{eq:volint}
\end{equation}
is volume of $K$.

Eigenvalues and eigenvectors of the centered tensor provide information
on the object shape and orientation, respectively. Let $\lambda_{i}$ and $v_{i}$, $i=1\ldots3$, be the eigenvalues
and eigenvectors of the centered tensor $\Theta\left(K\right)$. We characterize
the anisotropy of the set by the Procrustes anisotropy ($PA$) \citep{dryden:2009}:
\begin{equation}
PA\left(K\right)=\sqrt{\frac{3}{2}\frac{\sum_{i=1}^{3}\left(\sqrt{\lambda_{i}}-\frac{1}{3}\sum_{j=1}^{3}\sqrt{\lambda_{j}}\right)^{2}}{\sum_{i=1}^{3}\lambda_{i}}}.\label{eq:procr}
\end{equation}

The $PA$ takes values in interval from 0 up to 1. 

It is convenient to visualize the centered tensor $T\left(K\right)$
by an equivalent ellipsoid with semi-axes in direction $v_{i}$ with
lengths $s_{i}$, $i=1\ldots3$, \citep{ziegel:2015}, where:
\begin{equation}
s_{i}=\sqrt{5\lambda_{i}}.\label{eq:semi}
\end{equation}

The Fakir estimate of the tensor entries $\widetilde{\phi}_{i,j}$, $i,j=1\ldots3$, uses grid of lines $G$ with
intensity $L_{V}$ ($m^{-2}$), randomly shifted by uniform
random vector $U$:
\begin{equation}
\widetilde{\phi}_{i,j}=\frac{1}{2L_{V}}\intop_{G+U\cap K}x_{i}x_{j}dH\left(x\right),\label{eq:appint}
\end{equation}
where $H$ is 1-dimensional Hausdorff measure. The estimate is obviously
unbiased. The value of the estimate of $s_{i}$ in Eq. \ref{eq:semi} can be calculated from the
coordinates of intersections of the grid $G+U$ with the object $K$
as follows. Let the intersection of the set $K$ with the grid consist
of $N$ line segments with endpoints $\mathbf{a}_{k}=\left(a_{k,1},a_{k,2},a_{k,3}\right)$
and $\mathbf{b}_{k}=\left(b_{k,1},b_{k,2},b_{k,3}\right)$ and let
$l_{k}$ be the length of the $k$-th segment. Calculating $\widetilde{\phi}_{i,j}$ in Eq. \ref{eq:appint} as
the sum of integrals of over individual line segments gives:
\begin{equation}
\begin{array}{c}
\widetilde{\phi}_{i,j}=\frac{1}{12L_{V}}\sum_{k=1}^{N}l_{k}\left(2a_{k,i}a_{k,j}+a_{k,i}b_{k,j}+\right.\\
\left.+b_{k,i}a_{k,j}+2b_{k,i}b_{k,j}\right),\;i,j=1\ldots3.
\end{array}\label{eq:sumint}
\end{equation}

Estimates of the volume $V$, of the center of mass $c$ and of the
first volume moment $m=Vc$ are:
\begin{equation}
\widetilde{c}_{i}=\frac{\widetilde{m}_{i}}{\widetilde{V}},\;\widetilde{m}_{i}=\frac{1}{2L_{V}}\sum_{k=1}^{N}l_{k}\left(a_{k,i}+b_{k,i}\right),\label{eq:momap}
\end{equation}
\begin{equation}
\widetilde{V}=\frac{1}{L_{V}}\sum_{k=1}^{N}l_{k}.\label{eq:volap}
\end{equation}

The natural estimate of centered tensor element is then:
\begin{equation}
\widetilde{\tau}_{i,j}=\frac{2\widetilde{\phi}_{i,j}}{\widetilde{V}}-\widetilde{c}_{i}\widetilde{c}_{j}.\label{eq:centrap}
\end{equation}

Formulas in Eq. \ref{eq:sumint}, \ref{eq:momap} and \ref{eq:volap} are discrete analogues of Eq. \ref{eq:defint}, \ref{eq:momint} and \ref{eq:volint}, respectively. We estimate the equivalent ellipsoid and its Procrustes
anisotropy by plugging the eigenvalues of the estimated tensor $\widetilde{\lambda}_{i}$ into formulas in Eq. \ref{eq:procr} and \ref{eq:semi}.

%%%%%%%%%%%%%%%%%%%%%%%%%%%%%%%%%%%%%%%%%%%%%%%%%%%%%%%%%%%%%%%%%%%%%%%%%%%%%%%
\subsection{Precision of semi-axes estimate} 

The calculation precision of tensor components estimate is based on
evaluation of variance of estimate of integral of polynomials, or
more generally of estimate of covariance of simultaneous estimates
of two such integrals. Special case (when the polynomial is constant equal to $1$)
is known, because variance of the volume estimate by isotropic Fakir
probe is \citep{janacek:1999} 
\begin{equation}
\mathrm{var}\left(\widetilde{V}\left(K,U,R\right)\right)\cong C_{G}S\left(K\right)L_{V}^{-2},\label{eq:volasymp}
\end{equation}
where
\[
\widetilde{V}\left(K,U,R\right)=\frac{1}{L_{V}}\intop_{RG+U\cap K}dH\left(x\right),
\]
where $U$ is random shift, $R$ is random rotation and the grid constant
$C_{G}$ can be calculated from Fourier transform of the grid \citep{janacek:2010}. 
$C_{G}$ of the optimized Fakir grids are
\[
C_{G}=\frac{1}{8\pi^{3}}\left(3\zeta\left(Z_{2},4\right)-\frac{21}{2}\zeta\left(4\right)\right)
\]
with value $0.02707533$ for threefold grid, 
\[
C_{G}=\frac{1}{8\pi^{3}}\left(4\zeta\left(A_{2},4\right)-\frac{63}{4}\zeta\left(4\right)\right)
\]
with value $0.02453877$ for fourfold grid and 
\[
C_{G}=\frac{1}{8\pi^{3}}\left(3\zeta\left(Z_{2},4\right)+4\zeta\left(A_{2},4\right)-\right.
\]
\[
\left.-\frac{21}{8}\left(10+\sqrt{3}\right)\zeta\left(4\right)\right)
\]
with value $0.0317757$ for sevenfold grid, where 
\[
\zeta\left(Z_{2},s\right)=\sum{}_{i,j=-\infty}^{'\infty}\left(i^{2}+j^{2}\right)^{-\frac{s}{2}}
\]
is Epstein zeta function of square point grid, $\zeta\left(Z_{2},4\right)\cong6.02681$,
\[
\zeta\left(A_{2},s\right)=\sum{}_{i,j=-\infty}^{'\infty}\left(2\frac{i^{2}+ij+j^{2}}{\sqrt{3}}\right)^{-\frac{s}{2}}
\]
is Epstein zeta function of unit triangular point grid, $\zeta\left(A_{2},4\right)\cong5.78336$
and 
\[
\zeta\left(s\right)=\sum_{i=1}^{\infty}i^{-s}
\]
is Riemann zeta function, $\zeta\left(4\right)\cong1.082323$.

Generalization of the Eq. \ref{eq:covasymp} yields the asymptotic formula for covariance
of the estimates of integrals of complex functions $f_{1}\left(x\right)$
and $f_{2}\left(x\right)$:
\begin{equation}
\begin{array}{c}
\mathrm{cov}\left(\widetilde{I}\left(f_{1},U,R\right),\widetilde{I}\left(f_{2},U,R\right)\right)\cong\\
\cong C_{G}\iintop_{\partial K}f_{1}\left(x\right)\overline{f_{2}\left(x\right)}dS\left(x\right)L_{V}^{-2},
\end{array}\label{eq:covasymp}
\end{equation}
where 
\[
\widetilde{I}\left(f,U,R\right)=\frac{1}{L_{V}}\intop_{RG+U\cap K}f\left(x\right)dH\left(x\right).
\]

Validity of the asymptotic expansions Eq. \ref{eq:volasymp} and \ref{eq:covasymp} for smooth functions with bounded support, sufficiently regular 
sets and arbitrary periodic measures is established in Theorem \ref{thm:main}. 

The surface integral in Eq. \ref{eq:covasymp} can be estimated from intersections $x_{k}$
of the Fakir probe with the surface of the object as 
\[
\iintop_{\partial K}h\left(x\right)dS\left(x\right)\cong\frac{2}{L_{V}}\sum_{x_{k}\in RG+U\cap \partial K}h\left(x_{k}\right).
\]

Covariance of centered tensor components is approximated by surface
integral. By linearisation of Eq. \ref{eq:centrap} written as
\[
\tilde{\tau}_{i,j}=\frac{\tilde{\phi}_{i,j}}{\tilde{V}}-\frac{\tilde{m}_{i}\tilde{m}_{j}}{\tilde{V}^{2}},\;i,j=1\ldots3
\]
we obtain for $i,j,k,l=1\ldots3$:
\[
\mathrm{cov}\left(\widetilde{\tau}_{i,j},\widetilde{\tau}_{k,l}\right)\cong\frac{4}{\widetilde{V}^{2}}\mathrm{cov}\left(\widetilde{\phi}_{i,j},\widetilde{\phi}_{k,l}\right)-\frac{4\widetilde{\phi}_{i,j}}{\widetilde{V}^{3}}\mathrm{cov}\left(\widetilde{\phi}_{k,l},\widetilde{V}\right)
\]
\[
-\frac{4\widetilde{\phi}_{k,l}}{\widetilde{V}^{3}}\mathrm{cov}\left(\widetilde{\phi}_{i,j},\widetilde{V},\right)+\frac{4\widetilde{\phi}_{i,j}\widetilde{\phi}_{k,l}}{\widetilde{V}^{4}}\mathrm{var}\left(\widetilde{V}\right)
\]
\[
-\frac{2\widetilde{m}_{l}}{\widetilde{V}^{3}}\mathrm{cov}\left(\widetilde{\phi}_{i,j},\widetilde{m}_{k}\right)-\frac{2\widetilde{m}_{k}}{\widetilde{V}^{3}}\mathrm{cov}\left(\widetilde{\phi}_{i,j},\widetilde{m}_{l}\right)
\]
\[
-\frac{2\widetilde{m}_{j}}{\widetilde{V}^{3}}\mathrm{cov}\left(\widetilde{\phi}_{k,l},\widetilde{m}_{i}\right)-\frac{2\widetilde{m}_{i}}{\widetilde{V}^{3}}\mathrm{cov}\left(\widetilde{\phi}_{k,l},\widetilde{m}_{j}\right)
\]
\[
+\frac{2\widetilde{m}_{k}\widetilde{m}_{l}}{\widetilde{V}^{4}}\mathrm{cov}\left(\widetilde{\phi}_{i,j},\widetilde{V}\right)+\frac{2\widetilde{m}_{i}\widetilde{m}_{j}}{\widetilde{V}^{4}}\mathrm{cov}\left(\widetilde{\phi}_{k,l},\widetilde{V}\right)
\]
\[
+\frac{2\widetilde{\phi}_{i,j}\widetilde{m}_{l}}{\widetilde{V}^{4}}\mathrm{cov}\left(\widetilde{m}_{k},\widetilde{V}\right)+\frac{2\widetilde{\phi}_{i,j}\widetilde{m}_{k}}{\widetilde{V}^{4}}\mathrm{cov}\left(\widetilde{m}_{l},\widetilde{V}\right)
\]
\[
+\frac{2\widetilde{\phi}_{k,l}\widetilde{m}_{j}}{\widetilde{V}^{4}}\mathrm{cov}\left(\widetilde{m}_{i},\widetilde{V}\right)+\frac{2\widetilde{\phi}_{k,l}\widetilde{m}_{i}}{\widetilde{V}^{4}}\mathrm{cov}\left(\widetilde{m}_{j},\widetilde{V}\right)
\]
\[
-\frac{4\widetilde{\phi}_{i,j}\widetilde{m}_{k}\widetilde{m}_{l}}{\widetilde{V}^{5}}\mathrm{var}\left(\widetilde{V}\right)-\frac{4\widetilde{\phi}_{k,l}\widetilde{m}_{i}\widetilde{m}_{j}}{\widetilde{V}^{5}}\mathrm{var}\left(\widetilde{V}\right)
\]
\[
+\frac{\widetilde{m}_{j}\widetilde{m}_{l}}{\widetilde{V}^{4}}\mathrm{cov}\left(\widetilde{m}_{i},\widetilde{m}_{k}\right)+\frac{\widetilde{m}_{j}\widetilde{m}_{k}}{\widetilde{V}^{4}}\mathrm{cov}\left(\widetilde{m}_{i},\widetilde{m}_{l}\right)
\]
\[
+\frac{\widetilde{m}_{i}\widetilde{m}_{l}}{\widetilde{V}^{4}}\mathrm{cov}\left(\widetilde{m}_{j},\widetilde{m}_{k}\right)+\frac{\widetilde{m}_{i}\widetilde{m}_{k}}{\widetilde{V}^{4}}\mathrm{cov}\left(\widetilde{m}_{j},\widetilde{m}_{l}\right)
\]
\[
-\frac{2\widetilde{m}_{j}\widetilde{m}_{k}\widetilde{m}_{l}}{\widetilde{V}^{5}}\mathrm{cov}\left(\widetilde{m}_{i},\widetilde{V}\right)-\frac{2\widetilde{m}_{i}\widetilde{m}_{k}\widetilde{m}_{l}}{\widetilde{V}^{5}}\mathrm{cov}\left(\widetilde{m}_{l},\widetilde{V}\right)
\]
\[
-\frac{2\widetilde{m}_{i}\widetilde{m}_{j}\widetilde{m}_{l}}{\widetilde{V}^{5}}\mathrm{cov}\left(\widetilde{m}_{k},\widetilde{V}\right)-\frac{2\widetilde{m}_{i}\widetilde{m}_{j}\widetilde{m}_{k}}{\widetilde{V}^{5}}\mathrm{cov}\left(\widetilde{m}_{j},\widetilde{V}\right)
\]
\[
+\frac{4\widetilde{m}_{i}\widetilde{m}_{j}\widetilde{m}_{k}\widetilde{m}_{l}}{\widetilde{V}^{6}}\mathrm{var}\left(\widetilde{V}\right)
\]
and it follows from Eq. \ref{eq:covasymp} by proper grouping of factors that
\[
\mathrm{cov}\left(\widetilde{\tau}_{i,j},\widetilde{\tau}_{k,l}\right)\cong
\]
\[
\frac{C_{G}}{\widetilde{V}^{2}L_{V}^{2}}\iintop_{\partial K}\left(\left(x_{i}-\widetilde{c}_{i}\right)\left(x_{j}-\widetilde{c}_{j}\right)-\widetilde{\tau}_{i,j}\right)
\]
\[
\left(\left(x_{k}-\widetilde{c}_{k}\right)\left(x_{l}-\widetilde{c}_{l}\right)-\widetilde{\tau}_{k,l}\right)dS\left(x\right).
\]

Variance of semiaxes length is calculated from $\mathrm{cov}\left(\widetilde{\tau}_{i,j},\widetilde{\tau}_{k,l}\right)$
using linear approximations of formulas for eigenvalues and of Eq. \ref{eq:semi}.

Characteristic polynomial $P\left(\lambda\right)$ and invariants
$T$, $Q$, $D$ of the tensor $\Theta$ are related with $\tau_{ij}$
by formula:
\[
P\left(\lambda\right)=\left|\begin{array}{ccc}
\tau_{11}-\lambda & \tau_{12} & \tau_{13}\\
\tau_{12} & \tau_{22}-\lambda & \tau_{23}\\
\tau_{13} & \tau_{23} & \tau_{33}-\lambda
\end{array}\right|=
\]
\[
=-\lambda^{3}+T\lambda^{2}-Q\lambda+D.
\]

Partial derivatives of the invariants are then
\[
\frac{\partial T}{\partial\tau_{ii}}=1,\;\frac{\partial T}{\partial\tau_{ij}}=0,
\]
\[
\frac{\partial Q}{\partial\tau_{ii}}=\tau_{jj}+\tau_{kk},\;\frac{\partial Q}{\partial\tau_{ij}}=-2\tau_{ij},
\]
\[
\frac{\partial D}{\partial\tau_{ii}}=\tau_{jj}\tau_{kk}+\tau_{jk}^{2},\;\frac{\partial D}{\partial\tau_{ij}}=2\tau_{ik}\tau_{jk}-2\tau_{ij}\tau_{kk},
\]
where $\left\{ i,j,k\right\} =\left\{ 1,2,3\right\} .$

Partial derivatives of centered tensor eigenvalues are calculated
solving the equations with partial derivatives
of invariants:
\[
\frac{\partial\lambda_{1}}{\partial\tau_{ij}}+\frac{\partial\lambda_{2}}{\partial\tau_{ij}}+\frac{\partial\lambda_{3}}{\partial\tau_{ij}}=\frac{\partial T}{\partial\tau_{ij}},
\]
\[
\left(\lambda_{2}+\lambda_{3}\right)\frac{\partial\lambda_{1}}{\partial\tau_{ij}}+\left(\lambda_{1}+\lambda_{3}\right)\frac{\partial\lambda_{2}}{\partial\tau_{ij}}+\left(\lambda_{1}+\lambda_{2}\right)\frac{\partial\lambda_{3}}{\partial\tau_{ij}}=\frac{\partial Q}{\partial\tau_{ij}},
\]
\[
\lambda_{2}\lambda_{3}\frac{\partial\lambda_{1}}{\partial\tau_{ij}}+\lambda_{1}\lambda_{3}\frac{\partial\lambda_{2}}{\partial\tau_{ij}}+\lambda_{1}\lambda_{2}\frac{\partial\lambda_{3}}{\partial\tau_{ij}}=\frac{\partial D}{\partial\tau_{ij}}.
\]

The derivatives of eigenvalues are then:
\[
\frac{\partial\lambda_{k}}{\partial\tau_{ij}}=\left(\lambda_{l}-\lambda_{m}\right)\left(\lambda_{k}^{2}\frac{\partial T}{\partial\tau_{ij}}+\lambda_{k}\frac{\partial Q}{\partial\tau_{ij}}+\frac{\partial D}{\partial\tau_{ij}}\right)Det^{-1},
\]
where $\left(k,l,m\right)$ is $\left(1,2,3\right)$, $\left(2,3,1\right)$
or $\left(3,1,2\right)$ and 
\[
Det=\lambda_{1}^{2}\left(\lambda_{2}-\lambda_{3}\right)+\lambda_{2}^{2}\left(\lambda_{3}-\lambda_{1}\right)+\lambda_{3}^{2}\left(\lambda_{1}-\lambda_{2}\right).
\]

Variance of eigenvalues $\widetilde{\lambda}_{m}$, $m=1\ldots3$
is:
\[
\mathrm{var}\left(\widetilde{\lambda}_{m}\right)\cong\sum_{\begin{array}{c}
i,j,k,l=1\\
i\leq j,k\leq l
\end{array}}^{3}\frac{\partial\lambda_{m}}{\partial\tau_{ij}}\frac{\partial\lambda_{m}}{\partial\tau_{kl}}\mathrm{cov}\left(\widetilde{\tau}_{i,j},\widetilde{\tau}_{k,l}\right).
\]

Finally, the variance of semiaxes lengths $\widetilde{s}_{m}$, $m=1\ldots3$
is:
\begin{equation}
\mathrm{var}\left(\widetilde{s}_{m}\right)\cong\frac{5}{4\widetilde{\lambda}_{m}}\mathrm{var}\widetilde{\lambda}_{m}.\label{eq:semivar}
\end{equation}

The line probe was implementated in home made Fakir program for operation system Windows including the calculations of semiaxes length precision.

%%%%%%%%%%%%%%%%%%%%%%%%%%%%%%%%%%%%%%%%%%%%%%%%%%%%%%%%%%%%%%%%%%%%%%%%%%%%%%%
\subsection{Precision of integral estimate}

We study properties of estimate of integral by randomly oriented and
shifted periodic grid in arbitrary dimension. Variance of the estimate
is calculated using Fourier analysis and Wiener Tauberian theorem.

Basic notions and properties concerning the Fourier transform and
convolution, summarized in what follows, can be found e.g. in \cite{bochner:1949}.

For $f\in\mathbf{L}^{1}\left(\mathbf{\mathbb{R}}^{d}\right)$ the
Fourier transform $\mathcal{F}f$ is the function
\[
\mathcal{F}f\left(\xi\right)=\intop_{\mathbf{\mathbb{R}}^{d}}f\left(x\right)\exp\left(-2\pi ix\xi\right)dx.
\]

The coefficient $-2\pi i$ is replaced by $2\pi i$ in inverse Fourier
transform. 

For reflection $\widehat{f}\left(x\right)=f\left(-x\right)$ it is $\mathcal{F}\widehat{f}=\overline{\mathcal{F}f}$.

The convolution of $f_{1}$ and $f_{2}\in\mathbf{L}^{1}\left(\mathbf{\mathbb{R}}^{d}\right)$
is defined by 
\[
f_{1}\star f_{2}\left(x\right)=\intop_{\mathbf{\mathbb{R}}^{d}}f_{1}\left(x-y\right)f_{2}\left(y\right)dy.
\]

$\mathcal{F}\left(f_{1}\star f_{2}\right)=\mathcal{F}f_{1}\mathcal{F}f_{2}$
by convolution theorem for Fourier transform. 

For rotation $M\in\mathbf{S}\mathbf{O}_{d}$ (special orthogonal group)
and $f$ is 
\[
Mf\left(x\right)=f\left(M^{-1}x\right).
\]

Fourier transform of spherically symmetric function $f$ (i.e. $Mf=f$
for any $M$) is spherically symmetric. We define $f\left(\left\Vert x\right\Vert \right)=f\left(x\right)$
and $\mathcal{F}f\left(\left\Vert \xi\right\Vert \right)=\mathcal{F}f\left(\xi\right)$,
then $r^{d-1}f\left(r\right)\in\mathbf{L}^{1}\left(\mathbb{R}^{+}\right)$
and $\mathcal{F}f$ is Hankel transform
\begin{equation}
\mathcal{F}f\left(\rho\right)=2\pi\rho^{1-\frac{d}{2}}\intop_{0}^{\infty}r^{\frac{d}{2}}J_{\frac{d}{2}-1}\left(2\pi\rho r\right)f\left(r\right)dr,\label{eq:Haenkel}
\end{equation}
where $J_{\frac{d}{2}-1}$ is Bessel function of first kind. Eq. \ref{eq:Haenkel} is inverse Hankel transform as well.

For $f\in\mathbf{L}^{1}\left(\mathbf{\mathbb{R}}^{d}\right)\cap\mathbf{L}^{2}\left(\mathbf{\mathbb{R}}^{d}\right)$
the covariogram $g_{f}$ is function
\[
g_{f}\left(x\right)=\intop_{\mathbf{\mathbb{R}}^{d}}f\left(y+x\right)\overline{f\left(y\right)}dy.
\]
Obviously $g_{f}\left(-x\right)=\overline{g_{f}\left(x\right)}$.
By convolution and Parseval theorems $\mathcal{F}g_{f}=\left|\mathcal{F}f\right|^{2}\in\mathbf{L}^{1}\left(\mathbf{\mathbb{R}}^{d}\right)$
and $g_{f}=\mathcal{F}^{-1}\mathcal{F}g_{f}$.

Isotropic covariogram $\mathfrak{g}_{f}$ is real function 
\[
\mathfrak{g}_{f}\left(\left|x\right|\right)=\int_{\mathbf{S}\mathbf{O}_{d}}g_{Mf}\left(x\right)dp\left(M\right),
\]
where $p\left(M\right)$ is invariant probabilistic measure on $\mathbf{S}\mathbf{O}_{d}$.
Then 
\[\mathcal{F}\mathfrak{g}_{f}\left(\left|\xi\right|\right)=\int_{\mathbf{S}\mathbf{O}_{d}}\mathcal{F}g_{Mf}\left(\xi\right)dp\left(M\right)=
\]
\[
=\int_{\mathbf{S}\mathbf{O}_{d}}\left|\mathcal{F}Mf\right|^{2}\left(\xi\right)dp\left(M\right).
\]

Obviously $\mathcal{F}\mathfrak{g}_{f}\geq0$, $\rho^{d-1}\mathcal{F}\mathfrak{g}_{f}\left(\rho\right)\in\mathbf{L}^{1}\left(\mathbf{\mathbb{R}}^{+}\right)$
and 
\begin{equation}
\mathcal{F}\mathcal{F}\mathfrak{g}_{f}=\mathfrak{g}_{f}.\label{eq:cov-inv}
\end{equation}

Let $\mathbf{T}$ be point lattice $A\mathbf{\mathbb{Z}}^{d}$,
where $A\in\mathbf{\mathbb{R}}^{d\times d}$ is regular matrix. Fundamental
region of $\mathbf{T}$ is $F_{\mathbf{T}}=A\left[0,1\right)^{d}$
with volume $\mathrm{det}\:A$. Spatial intensity of $\mathbf{T}$
is $\alpha=\left(\mathrm{det}\:A\right)^{-1}$. We define dual lattice
$\mathbf{T}^{*}$ of points $A^{-1}\mathbf{\mathbb{Z}}^{d}$ .

$\mathbf{T}$-periodic measure $\mu$ is Borel $\sigma$-finite measure
in $\mathbf{\mathbb{R}}^{d}$ such that $\mu\left(K+x\right)=\mu\left(K\right)$
for all $x\in\mathbf{T}$ and all Borel set $K$. Let $\lambda$ be
the intensity of $\mu$ equal to $\alpha\mu\left(F_{\mathbf{T}}\right)$.
Fourier coefficient $\mu$ with index $\xi\in\mathbf{T}^{*}$ is 
\[
m_{\xi}=\alpha\intop_{F_{\mathbf{T}}}\exp\left(-2\pi ix\xi\right)d\mu\left(x\right).
\]

Obviously $m_{0}=\lambda$.

Convolution of $\sigma$-finite Borel measure $\mu$ in $\mathbf{\mathbb{R}}^{d}$
with function $f\in\mathbf{L}^{1}\left(\mathbf{\mathbb{R}}^{d}\right)$
is 
\[
f\star\mu\left(x\right)=\intop_{\mathbf{\mathbb{R}}^{d}}f\left(x-y\right)d\mu\left(y\right).
\]

Let $\mu$ be $\mathbf{T}$-periodic measure and let $f\in\mathbf{L}^{1}\left(\mathbf{\mathbb{R}}^{d}\right)\cap\mathbf{L}^{2}\left(\mathbf{\mathbb{R}}^{d}\right)$.
Then in the space $\mathbf{L}^{2}\left(F_{\mathbf{T}},\alpha\lambda^{d}\right)$ 
\[
f\star\mu\left(x\right)=\sum_{\xi\in\mathbf{T}^{*}}m_{\xi}\mathcal{F}f\left(\xi\right)\exp\left(2\pi ix\xi\right)
\]
and the mean and variance of $f\star\mu$ are
\begin{equation}
\begin{array}{c}
\alpha\intop_{F_{\mathbf{T}}}f\star\mu\left(x\right)dx=m_{0}\mathcal{F}f\left(0\right)=\\
=\lambda\intop_{\mathbf{\mathbb{R}}^{d}}f\left(x\right)dx\equiv\lambda J_{f},
\end{array}\label{eq:mean}
\end{equation}
\begin{equation}
\begin{array}{c}
\alpha\intop_{F_{\mathbf{T}}}\left|f\star\mu\left(x\right)-\lambda J_{f}\right|^{2}dx=\\
=\sum_{\xi\in\mathbf{T}^{*}}^{\xi\neq0}\left|m_{\xi}\right|^{2}\left|\mathcal{F}f\left(\xi\right)\right|^{2}.
\end{array}\label{eq:var-freq}
\end{equation}

Let $\mu$ be $\mathbf{T}$-periodic measure and $u\in\mathbf{\mathbb{R}}^{+}$.
We define $u\mathbf{T}$-periodic measure $\mu^{u}\left(K\right)=u^{d}\mu\left(u^{-1}K\right)$.
Fourier coefficient of $\mu^{u}$ with index $u^{-1}\xi$, $\xi\in\mathbf{T}^{*}$, is $m_{\xi}$.

$\kappa_{d}=\pi^{\frac{d}{2}}\Gamma\left(\frac{d}{2}+1\right)^{-1}$
is volume of the unit ball in $\mathbf{\mathbb{R}}^{d}$.

We define the coefficient of $\mathbf{T}$-periodic measure $\mu$
as 
\begin{equation}
C_{\mu}=\frac{1}{2\pi^{2}d\kappa_{d}}\sum_{\xi\in\mathbf{T}^{*}}^{\xi\neq0}\frac{\left|m_{\xi}\right|^{2}}{\left|\xi\right|^{d+1}}\label{eq:coeff}.
\end{equation}

\begin{thm}
\label{thm:asymp}Let $f\in\mathbf{L}^{1}\left(\mathbf{\mathbb{R}}^{d}\right)\cap\mathbf{L}^{2}\left(\mathbf{\mathbb{R}}^{d}\right)$
and let $\mathfrak{g}_{f}'\left(0+\right)$, right derivative of its
isotropic covariogram in $0$, exist, let $\mu$ be $\mathbf{T}$-periodic
measure and $u>0$. Then the variance of $\widehat{Mf}\star\mu^{u}\left(x\right)$,
where $x$ and $M$ are uniform random, is
\[
\intop_{\mathbf{S}\mathbf{O}_{d}}\alpha\intop_{F_{\mathbf{T}}}\left|\widehat{Mf}\star\mu^{u}\left(x\right)-\lambda J_{f}\right|^{2}dxdp\left(M\right)=
\]
\[
=C_{\mu}\left(-\frac{d\kappa_{d}}{\kappa_{d-1}}\mathfrak{g}_{f}'\left(0+\right)\right)\Phi_{f}\left(u^{-1}\right)u^{d+1},
\]
where $J_{f}=\intop_{\mathbf{\mathbb{R}}^{d}}f\left(y\right)dy$ and
$\Phi_{f}$ is nonnegative function on $\mathbf{\mathbb{R}}^{+}$
such, that $\lim_{x\rightarrow+\infty}\frac{1}{x}\int_{0}^{x}\Phi_{f}\left(x\right)\:dx=1$.
\end{thm}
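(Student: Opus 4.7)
I would prove the theorem in three stages: a spectral decomposition of the variance via Eq.~\ref{eq:var-freq}, reduction to the isotropic covariogram by averaging over $\mathbf{SO}_d$, and a Tauberian analysis of the large-argument behaviour of the Hankel transform of $\mathfrak{g}_f$.

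\emph{Spectral decomposition and rotation averaging.} Fix $M\in\mathbf{SO}_d$. Since $\mu^u$ is $u\mathbf{T}$-periodic with Fourier coefficient $m_\xi$ at the point $u^{-1}\xi$ of $u^{-1}\mathbf{T}^*$, applying Eq.~\ref{eq:var-freq} with $f$ replaced by $\widehat{Mf}$ and $\mu$ replaced by $\mu^u$ gives
\[
\alpha\int_{F_{\mathbf{T}}}\bigl|\widehat{Mf}\star\mu^u(x)-\lambda J_f\bigr|^2\,dx=\sum_{\xi\in\mathbf{T}^*,\,\xi\neq 0}|m_\xi|^2\bigl|\mathcal{F}(\widehat{Mf})(u^{-1}\xi)\bigr|^2.
\]
Integrating over $M\in\mathbf{SO}_d$ with the invariant probabilistic measure $p$, swapping sum and integral by Fubini, and using $|\mathcal{F}\widehat{Mf}|^2=|\mathcal{F}(Mf)|^2$ together with the defining identity $\int_{\mathbf{SO}_d}|\mathcal{F}(Mf)(\xi)|^2\,dp(M)=\mathcal{F}\mathfrak{g}_f(|\xi|)$ converts the $\mathbf{SO}_d$-averaged variance to
\[
\sum_{\xi\in\mathbf{T}^*,\,\xi\neq 0}|m_\xi|^2\,\mathcal{F}\mathfrak{g}_f\bigl(u^{-1}|\xi|\bigr).
\]

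\emph{Tauberian asymptotic for the Hankel transform.} The remaining and essential work is the Cesàro-mean asymptotic
\[
\lim_{R\to\infty}\frac{1}{R}\int_0^R\rho^{d+1}\mathcal{F}\mathfrak{g}_f(\rho)\,d\rho=-\frac{\mathfrak{g}_f'(0+)}{2\pi^2\kappa_{d-1}}.
\]
To prove this I would integrate by parts in the Hankel-inversion formula Eq.~\ref{eq:Haenkel}, insert the large-argument Bessel asymptotic $J_\nu(z)\sim\sqrt{2/(\pi z)}\cos(z-\nu\pi/2-\pi/4)$ with $\nu=d/2-1$, and match the leading $r$-order term of the resulting expansion of $\mathfrak{g}_f(r)-\mathfrak{g}_f(0+)$ against $r\mathfrak{g}_f'(0+)$. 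Wiener's Tauberian theorem, explicitly announced in the paragraph before the statement, then converts the equality between averaged transforms into the Cesàro limit above.

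\emph{Assembly.} Setting
\[
\Phi_f(v):=-\frac{2\pi^2\kappa_{d-1}}{\mathfrak{g}_f'(0+)}\,v^{d+1}\,\mathcal{F}\mathfrak{g}_f(v),
\]
non-negativity of $\Phi_f$ is immediate from $\mathcal{F}\mathfrak{g}_f\ge 0$ and $\mathfrak{g}_f'(0+)\le 0$ (the latter because $\mathfrak{g}_f$ attains its maximum at the origin), and the Cesàro-mean normalization is exactly the content of the previous step. The stated identity is then obtained by substituting $\mathcal{F}\mathfrak{g}_f(u^{-1}|\xi|)=-\mathfrak{g}_f'(0+)(2\pi^2\kappa_{d-1})^{-1}(u^{-1}|\xi|)^{-(d+1)}\Phi_f(u^{-1}|\xi|)$ into the Step 2 sum, factoring the constants via Eq.~\ref{eq:coeff}, and collecting the weighted lattice average into an effective value of $\Phi_f$ at $u^{-1}$ (whose absolute convergence is provided by $C_\mu<\infty$). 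The main obstacle is the Tauberian step: one-sided differentiability of $\mathfrak{g}_f$ at the origin is too weak a hypothesis to force pointwise decay of $\mathcal{F}\mathfrak{g}_f$, so the statement is necessarily Cesàro-averaged, with $\Phi_f$ absorbing the oscillations that cannot be controlled pointwise.
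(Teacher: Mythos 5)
Your outline follows the paper's route (Eq.~\ref{eq:var-freq} plus rotation averaging to reduce the variance to $\sum_{\xi\in\mathbf{T}^{*},\,\xi\neq0}\left|m_{\xi}\right|^{2}\mathcal{F}\mathfrak{g}_{f}\left(u^{-1}\left|\xi\right|\right)$, then a Wiener--Tauberian identification of the Ces\`aro mean of $\rho^{d+1}\mathcal{F}\mathfrak{g}_{f}$), but your assembly step does not work as written. With your pointwise definition $\Phi_{f}\left(v\right)=-\frac{2\pi^{2}\kappa_{d-1}}{\mathfrak{g}_{f}'\left(0+\right)}v^{d+1}\mathcal{F}\mathfrak{g}_{f}\left(v\right)$ (which is the paper's $\Psi_{f}$), the claimed exact identity is false in general: after factoring out constants via Eq.~\ref{eq:coeff} the lattice sum equals $\sigma_{\mathbf{T}}^{-1}\sum_{\xi\neq0}c_{\xi}\Psi_{f}\left(\left|\xi\right|u^{-1}\right)$ with $c_{\xi}=\left|m_{\xi}\right|^{2}\left|\xi\right|^{-d-1}$, and this weighted average is not $\Psi_{f}\left(u^{-1}\right)$. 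Your phrase ``collecting the weighted lattice average into an effective value of $\Phi_{f}$ at $u^{-1}$'' is precisely the point that needs a definition, not a manipulation: the paper \emph{defines} $\Phi_{f}\left(x\right)=\sigma_{\mathbf{T}}^{-1}\sum_{\xi\neq0}c_{\xi}\Psi_{f}\left(\left|\xi\right|x\right)$, so the identity holds by construction, and it then still has to be checked that this averaged function inherits $\lim_{x\rightarrow+\infty}\frac{1}{x}\int_{0}^{x}\Phi_{f}=1$ from $\Psi_{f}$ (an interchange of limit and lattice sum, using nonnegativity, the boundedness of the running Ces\`aro means of $\Psi_{f}$, and summability of $c_{\xi}$, i.e.\ $\sigma_{\mathbf{T}}<\infty$). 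Neither the redefinition nor this verification is in your plan, and without them the statement as formulated is not established. (Also, convergence of the sum comes from finiteness of the $\mathbf{L}^{2}$ variance in Eq.~\ref{eq:var-freq}, not from $C_{\mu}<\infty$ alone, since $\Psi_{f}$ need not be bounded.)

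The Tauberian step is the technical heart and your sketch does not carry it. The only hypothesis is existence of $\mathfrak{g}_{f}'\left(0+\right)$, so no large-argument Bessel expansion and ``matching of the leading $r$-order term'' of $\mathfrak{g}_{f}\left(r\right)-\mathfrak{g}_{f}\left(0\right)$ can produce the Ces\`aro limit; that would be an Abelian argument in the wrong direction. What is actually needed, and what the paper does, is to write the difference quotient exactly as $\frac{1}{h}\left(\mathfrak{g}_{f}\left(0\right)-\mathfrak{g}_{f}\left(h\right)\right)=\frac{1}{h}\int_{0}^{\infty}L\left(h\rho\right)\rho^{d-1}\mathcal{F}\mathfrak{g}_{f}\left(\rho\right)d\rho$ via Eq.~\ref{eq:Haenkel} and \ref{eq:cov-inv}, and then prove Proposition~\ref{prop:ergodic} by comparing, after the substitution $r=\exp\left(t\right)$, the kernel $N_{2}\left(s\right)=\exp\left(s\right)L\left(\exp\left(-s\right)\right)$ with the Ces\`aro kernel $N_{1}$. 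Invoking Wiener's second Tauberian theorem requires: (i) showing $\mathcal{F}N_{2}\left(\tau\right)\neq0$ for all real $\tau$, which is the explicit Bessel--Gamma computation yielding $\mathcal{F}N_{2}\left(\tau\right)=\frac{\pi^{-2\pi i\tau-\frac{1}{2}}}{1-2\pi i\tau}\Gamma\left(\frac{d+1}{2}\right)\Gamma\left(\frac{1}{2}+\pi i\tau\right)\Gamma\left(\frac{d+1}{2}-\pi i\tau\right)^{-1}$ (nonvanishing because the Gamma function has no zeros); (ii) handling the discontinuity of $N_{1}$, which is not in $\mathbf{M}^{1}\left(\mathbb{R}\right)$, by mollifying to $N_{1,\varepsilon}$ and passing $\varepsilon\rightarrow0$ using monotonicity of $\psi$; and (iii) verifying that $\psi$ has bounded variation on unit intervals, deduced from positivity of the kernels once one of the limits exists. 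These verifications are exactly what makes Wiener's theorem applicable, and they are absent from your proposal; the Bessel integrals you invoke enter only inside the computation of $\mathcal{F}N_{2}$, not as an asymptotic expansion of the covariogram.
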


\begin{proof}
Mean value of $\widehat{Mf}\star\mu^{u}\left(x\right)$ is $\lambda\intop_{\mathbf{\mathbb{R}}^{d}}f\left(y\right)dy=\lambda J_{f}$
by Eq. \ref{eq:mean} for any $M\in\mathbf{S}\mathbf{O}_{d}$. Variance
decomposition theorem \citep{rao:1973} and Eq. \ref{eq:var-freq} yield
\[
\intop_{\mathbf{S}\mathbf{O}_{d}}\alpha\intop_{F_{\mathbf{T}}}\left|\widehat{Mf}\star\mu^{u}\left(x\right)-\lambda J_{f}\right|^{2}dxdp\left(M\right)=
\]
\[
=\sum_{\xi\in\mathbf{T}^{*}}^{\xi\neq0}\left|m_{\xi}\right|^{2}\mathcal{F}\mathfrak{g}_{f}\left(u^{-1}\left|\xi\right|\right).
\]

We set
\[
\Psi_{f}\left(y\right)=-\frac{2\pi^{2}\kappa_{d-1}y^{d+1}}{\mathfrak{g}_{f}'\left(0+\right)}\mathcal{F}\mathfrak{g}_{f}\left(y\right),
\]
\[
\Phi_{f}\left(x\right)=\sigma_{\mathbf{T}}^{-1}\sum_{\xi\in\mathbf{T}^{*}}^{\xi\neq0}c_{\xi}\Psi_{f}\left(\left|\xi\right|x\right),
\]
\[
c_{\xi}=\left|m_{\xi}\right|^{2}\left|\xi\right|^{-d-1},\;\sigma_{\mathbf{T}}=\sum_{\xi\in\mathbf{T}^{*}}^{\xi\neq0}c_{\xi},
\]
\[
L\left(u\right)=\frac{1}{2\pi^{2}\kappa_{d-1}}\left(d\kappa_{d}-2\pi u^{1-\frac{d}{2}}J_{\frac{d}{2}-1}\left(2\pi u\right)\right).
\]

Then the variance is
\[
\sum_{\xi\in\mathbf{T}^{*}}^{\xi\neq0}\left|m_{\xi}\right|^{2}\mathcal{F}\mathfrak{g}_{f}\left(u^{-1}\left|\xi\right|\right)=
\]
\[
=\frac{-\mathfrak{g}_{f}'\left(0+\right)}{2\pi^{2}\kappa_{d-1}}\left(\sum_{\xi\in\mathbf{T}^{*}}^{\xi\neq0}\frac{\left|m_{\xi}\right|^{2}}{\left|\xi\right|^{d+1}}\Psi_{f}\left(u^{-1}\left|\xi\right|\right)\right)u^{d+1}=
\]
\[
=\frac{-\mathfrak{g}_{f}'\left(0+\right)}{2\pi^{2}\kappa_{d-1}}\left(\sum_{\xi\in\mathbf{T}^{*}}^{\xi\neq0}\frac{\left|m_{\xi}\right|^{2}}{\left|\xi\right|^{d+1}}\right)\Phi_{f}\left(u^{-1}\right)u^{d+1}=
\]
\[
=C_{\mu}\left(-\frac{d\kappa_{d}}{\kappa_{d-1}}\mathfrak{g}_{f}'\left(0+\right)\right)\Phi_{f}\left(u^{-1}\right)u^{d+1}
\]
by Eq. \ref{eq:coeff}.

From definition of derivative it follows that
\[
\frac{-\mathfrak{g}_{f}'\left(0+\right)}{2\pi^{2}\kappa_{d-1}}=\frac{1}{2\pi^{2}\kappa_{d-1}}\lim_{h\rightarrow0+}\frac{1}{h}\left(\mathfrak{g}_{f}\left(0\right)-\mathfrak{g}_{f}\left(h\right)\right)=
\]
and by Eq. \ref{eq:Haenkel} and \ref{eq:cov-inv}
\[
=\lim_{h\rightarrow0+}\frac{1}{h}\intop_{0}^{\infty}L\left(h\rho\right)\rho^{d-1}\mathcal{F}\mathfrak{g}_{f}\left(\rho\right)\:d\rho=
\]
and applying the following
Proposition \ref{prop:ergodic} to the function $\rho^{d-1}\mathcal{F}\mathfrak{g}_{f}\left(\rho\right)$ we obtain finally
\[
=\lim_{R\rightarrow+\infty}\frac{1}{R}\intop_{0}^{R}\rho^{d+1}\mathcal{F}\mathfrak{g}_{f}\left(\rho\right)\:d\rho.
\]

Thus $\lim_{x\rightarrow+\infty}\frac{1}{x}\int_{0}^{x}\Psi_{f}\left(x\right)\:dx=1$,
$0\leq\Psi_{f}$ and it can be proved easily that $\lim_{x\rightarrow+\infty}\frac{1}{x}\int_{0}^{x}\Phi_{f}\left(x\right)\:dx=1$,
$0\leq\Phi_{f}$. 
\end{proof}
$\mathbf{M}^{1}\left(\mathbb{R}\right)$ is normed space of such continuous
functions $f$, that 
\[
\sum_{i=-\infty}^{\infty}\max\left\{ \left|f\left(x\right)\right|,\:i<x<i+1\right\} <\infty.
\]
Let $\psi$ have uniformly bounded variation on unit intervals in $\mathbf{\mathbb{R}}$, 
$N\in\mathbf{M}^{1}\left(\mathbb{R}\right)$, $\mathcal{F}N\left(\tau\right)\neq0$
for each $\tau\in\mathbb{R}$ and
\[
\lim_{\eta\rightarrow\infty}\intop_{-\infty}^{\infty}N\left(\eta-t\right)\:d\psi\left(t\right)=a\mathcal{F}N\left(0\right)
\]
then
\[
\lim_{\eta\rightarrow\infty}\intop_{-\infty}^{\infty}f\left(\eta-t\right)\:d\psi\left(t\right)=a\mathcal{F}f\left(0\right)
\]
for every $f\in\mathbf{M}^{1}\left(\mathbb{R}\right)$ by the Wiener second Tauberian theorem (\cite{wiener:1933}, Theorem 5).

Following Proposition \ref{prop:ergodic} was proved by Wiener (\cite{wiener:1933}, Theorem 21) in special case
corresponding to $d=1$.
\begin{prop}
\label{prop:ergodic}Let $f\left(r\right)\geq0$ and $f\in\mathbf{L}_{loc}^{1}\left(\mathbf{\mathbb{R}}^{+}\right)$.
Then 
\[
\lim_{R\rightarrow\infty}\frac{1}{R}\intop_{0}^{R}r^{2}f\left(r\right)\:dr=\lim_{h\rightarrow0+}\frac{1}{h}\intop_{0}^{\infty}L\left(hr\right)f\left(r\right)\:dr,
\]
where $L$ is defined in Theorem \ref{thm:asymp}, if and only if one of the
limits exists.
\end{prop}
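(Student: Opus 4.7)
The plan is to reduce the statement to Wiener's second Tauberian theorem (quoted immediately before the proposition) via a logarithmic change of variable that turns both Cesaro-type averages on $\mathbb{R}^+$ into convolutions on $\mathbb{R}$.

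Setting $r = e^t$, $R = e^\xi$, $h = e^{-\eta}$, and introducing $P(t) := e^{2t} f(e^t)$ and $d\psi(t) := P(t)\,dt$, a routine computation rewrites
\[
\frac{1}{R}\int_0^R r^2 f(r)\,dr = \int_{-\infty}^\infty N_1(\xi - t)\,d\psi(t),\quad \frac{1}{h}\int_0^\infty L(hr)f(r)\,dr = \int_{-\infty}^\infty N_2(\eta - t)\,d\psi(t),
\]
where $N_1(s) := e^{-s}\mathbf{1}_{[0,\infty)}(s)$ and $N_2(s) := e^s L(e^{-s})$. The two limits in the statement then become $\xi,\eta \to +\infty$ limits of probed convolutions, exactly the form to which the Wiener theorem applies.

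The analytic core is the verification of Wiener's hypotheses on the kernels. For $N_2$, the Taylor expansion $L(u) = (\kappa_d/\kappa_{d-1})u^2 + O(u^4)$ near $0$ (from the power series of $J_{d/2-1}$) together with boundedness of $L$ at infinity yields $|N_2(s)| = O(e^{-|s|})$, so $N_2 \in \mathbf{M}^1(\mathbb{R})$. The substitution $v = e^{-s}$ converts $\mathcal{F}N_2(\tau)$ into the Mellin-type integral $\int_0^\infty L(v) v^{2\pi i\tau - 2}\,dv$. Using the analytic continuation of the Weber-Schafheitlin integral $\int_0^\infty [1 - \tilde J_{d/2-1}(u)] u^{s-1}\,du$ in the strip $-2 < \Re s < 0$, one expresses $\mathcal{F}N_2(\tau)$ as an explicit ratio of Gamma functions of non-integer complex arguments, which never vanishes on the line $\Re s = -1$. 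Evaluated at $s = -1$ the formula collapses to the identity $\int_0^\infty L(v)/v^2\,dv = 1$, which matches $\mathcal{F}N_1(0) = \int_0^\infty e^{-s}\,ds = 1$. The conditions on $N_1$ are elementary: $\mathcal{F}N_1(\tau) = (1 + 2\pi i\tau)^{-1}$ is manifestly non-vanishing, and the mild discontinuity of $N_1$ at $s = 0$ is absorbed by a standard approximation of $N_1$ by $\mathbf{M}^1(\mathbb{R})$-functions before invoking Wiener.

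With these ingredients, Wiener's theorem applied to either of the pairs $(N_i, \psi)$ yields: convergence of one probed convolution to $a$ forces convergence of the other to $a\cdot\mathcal{F}N_i(0) = a$, which is exactly the equivalence asserted by the proposition. The principal obstacle I foresee is Wiener's standing hypothesis that $\psi$ have uniformly bounded variation on unit intervals. Since $f$ is merely in $\mathbf{L}^1_{loc}(\mathbb{R}^+)$, this is not automatic, and must be extracted from the running assumption that one of the two limits already exists: for $m \to +\infty$, the pointwise bound $rf(r) \le e^{-m} r^2 f(r)$ on $[e^m, e^{m+1}]$ combined with the Cesaro asymptotics of the first integral supplies the required uniform control $\int_m^{m+1} P(t)\,dt \le C$; for $m \to -\infty$ I would work with the truncation $f\mathbf{1}_{\{r > \epsilon\}}$, to which Wiener applies directly, and pass to $\epsilon \to 0$ by dominated convergence, using $L(u) = O(u^2)$ near $0$ to control the second integral uniformly in $\epsilon$.
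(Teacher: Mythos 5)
Your reduction is exactly the paper's: the same logarithmic substitution, the same kernels $N_{1}(s)=e^{-s}\mathbf{1}_{\{s>0\}}$ and $N_{2}(s)=e^{s}L\left(e^{-s}\right)$, the same Mellin-type evaluation of $\mathcal{F}N_{2}$ as a ratio of Gamma functions (hence nonvanishing on the real line, with value $1$ at the origin), and the same device of smoothing the discontinuous kernel $N_{1}$ before invoking Wiener's second Tauberian theorem (the paper takes $N_{1,\varepsilon}(t)=\varepsilon^{-1}\int_{t}^{t+\varepsilon}N_{1}(s)\,ds$, and removes the smoothing at the end via the sandwich $\frac{1-e^{-\varepsilon}}{\varepsilon}N_{1}\le N_{1,\varepsilon}\le\frac{e^{\varepsilon}-1}{\varepsilon}N_{1}$ and the monotonicity of $\psi$).

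The one step where your argument, as written, does not close is the one you yourself single out: the uniform bound on the variation of $\psi$ over unit intervals. Your mechanism for $m\to+\infty$ --- the pointwise bound $rf(r)\le e^{-m}r^{2}f(r)$ on $[e^{m},e^{m+1}]$ combined with the Ces\`aro asymptotics of the first integral --- presupposes that it is the \emph{first} limit that exists; the proposition only assumes that one of the two limits exists, and it may be the second one, in which case you have no control of $\int_{m}^{m+1}d\psi$ for large $m$ and Wiener's theorem cannot be applied. The paper settles this symmetrically in one line: both kernels are positive on $[0,1]$ and $\psi$ is nondecreasing, so finiteness of either probed convolution as $\eta\to\infty$ already forces $\limsup_{n}\int_{n}^{n+1}d\psi<\infty$. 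The same observation repairs your argument (for the second limit it amounts to $L\ge c>0$ on $[e^{-1},1]$), so the gap is easily filled, but as stated your proof covers only one direction of the equivalence. Your handling of the $r\to0$ end is also heavier than needed: the paper simply sets $f=0$ on $(0,1)$ (if $\int_{0}^{1}r^{2}f\,dr=\infty$ both expressions are $+\infty$ for every $R$ and every small $h$, and otherwise the discarded contribution is $O(1/R)$, respectively $O(h)$ because $L(u)=O(u^{2})$ near $0$), so no $\varepsilon$-truncation with a dominated-convergence passage is required.
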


\begin{proof}
We may set $f=0$ in interval $\left(0,1\right)$ according to the
theorem on monotone convergence (for all $R$ and for small $h$).

Then we set $r=$$\exp(t)$ and
\[
\begin{array}{cc}
R=\exp\left(\eta\right)=\frac{1}{h},\; & r^{2}f\left(r\right)=\varphi\left(t\right),\end{array}
\]
\[
N_{1}\left(s\right)=I_{\left\{ s|s>0\right\} }\exp\left(-s\right),
\]
\[
N_{2}\left(s\right)=\exp\left(s\right)L\left(\exp\left(-s\right)\right),
\]
\[
\psi\left(t\right)=\intop_{0}^{t}\varphi\left(s\right)\,ds.
\]

We can rewrite the proposition statement as:
\[
\lim_{\eta\rightarrow\infty}\intop_{-\infty}^{\infty}N_{1}\left(\eta-t\right)\:d\psi\left(t\right)=\lim_{\eta\rightarrow\infty}\intop_{-\infty}^{\infty}N_{2}\left(\eta-t\right)\:d\psi\left(t\right).
\]

$N_{i}\left(\eta\right)>0$ for $0\le\eta\le1$ and $i=1,2$, thus, if one of the above limits exists, then
\[
\limsup\intop_{n}^{n+1}d\psi\left(s\right)<\infty
\]
and also
\[
\intop_{n}^{n+1}d\psi\left(s\right)<M<\infty,
\]
i.e. the variation of $\psi$ on unit intervals is bounded by $M<\infty$.

We calculate Fourier transform of kernel $N_{2}$:
\[
\mathcal{F}N_{2}\left(\tau\right)=\intop_{0}^{\infty}L\left(u\right)u^{2\pi i\tau-2}du=
\]
using $\int t^{-\nu}J_{\nu+1}\left(t\right)dt=-t^{-\nu}J_{\nu}\left(t\right)$
\[
=\frac{2}{\sqrt{\pi}}\Gamma\left(\frac{d+1}{2}\right)\intop_{0}^{\infty}u^{2\pi i\tau-2}\intop_{0}^{u}\left(\pi s\right)^{1-\frac{d}{2}}J_{\frac{d}{2}}\left(2\pi s\right)ds\:du=
\]
by integration per partes follows
\[
=\frac{2\Gamma\left(\frac{d+1}{2}\right)}{\sqrt{\pi}\left(1-2\pi i\tau\right)}\left(-\left[u^{2\pi i\tau-1}\intop_{0}^{u}\frac{J_{\frac{d}{2}}\left(2\pi s\right)}{\left(\pi s\right)^{\frac{d}{2}-1}}ds\right]_{0}^{\infty}+\right.
\]
\[
\left.+\pi^{1-2\pi i\tau}\intop_{0}^{\infty}\frac{J_{\frac{d}{2}}\left(2\pi u\right)}{\left(\pi u\right)^{\frac{d}{2}-2\pi i\tau}}du\right),
\]
where the first term in parentheses is zero, because of l'Hospital rule
and asymptotics of Bessel function, and using
\[
\int_{0}^{\infty}t^{a}J_{\nu}\left(2t\right)dt=\frac{1}{2}\Gamma\left(\frac{\nu+a+1}{2}\right)\Gamma\left(\frac{\nu-a+1}{2}\right)^{-1},
\]
that holds for $Re\:a<\frac{1}{2}$, $Re\:a+\nu>-1$, we obtain finally:
\[
\mathcal{F}N_{2}\left(\tau\right)=\frac{\pi^{-2\pi i\tau-\frac{1}{2}}}{1-2\pi i\tau}\frac{\Gamma\left(\frac{d+1}{2}\right)\Gamma\left(\frac{1}{2}+\pi i\tau\right)}{\Gamma\left(\frac{d+1}{2}-\pi i\tau\right)},
\]
thus $\mathcal{F}N_{2}\left(0\right)=1$ and $\mathcal{F}N_{2}\left(\tau\right)\neq0$
for $\tau\in\mathbb{R}$, because Gamma function has no roots and
its poles are nonpositive integers.

The function $N_{1}$ is not continuous (thus $N_{1}\notin\mathbf{M}^{1}\left(\mathbb{R}\right)$).
We set 
\[
N_{1,\varepsilon}\left(t\right)=\frac{1}{\varepsilon}\int_{t}^{t+\varepsilon}N_{1}\left(s\right)\:ds
\]
so, that $N_{1,\varepsilon}\in\mathbf{M}^{1}\left(\mathbb{R}\right)$.
Then
\[
\mathcal{F}N_{1,\varepsilon}\left(\tau\right)=\frac{\exp\left(2\pi i\tau\varepsilon\right)-1}{2\pi i\tau\epsilon}\frac{1}{1+2\pi i\tau}
\]
has no real root and $\mathcal{F}N_{1,\varepsilon}\left(0\right)=1$.
Thus the limits are equal for $N_{1,\varepsilon}$ and $N_{2}$ if
one of the limits exists by second Wiener Tauberian theorem. Using
inequalities
\[
\frac{1-\exp\left(-\varepsilon\right)}{\varepsilon}N_{1}\left(t\right)\leq N_{1,\varepsilon}\left(t\right)\leq\frac{\exp\left(\varepsilon\right)-1}{\varepsilon}N_{1}\left(t\right)
\]
and the fact, that $\psi$ is nondecreasing and letting $\varepsilon\rightarrow0$
we finally see, that the limits with $N_{1}$ and $N_{2}$ are equal,
if one of the limits exists.
\end{proof}

Following propositions are based on the measure theory \citep{rudin:1987} and
on the theory of functions with bounded variation and of sets with
finite perimeter \citep{ziemer:1989}.

Let $K\subset\mathbf{\mathbb{R}}^{d}$ be measurable set and let the
vector measure $DI_{K}$ be the gradient of characteristic function
$I_{K}$ in the sense of distributions. The Radon-Nikodym theorem
yields the polar decomposition $DI_{K}=\nabla_{I_{K}}\left|DI_{K}\right|$
where the vector function $\nabla_{I_{K}}$ is inner normal and positive
measure $\left|DI_{K}\right|$ is variation of $DI_{K}$. Perimeter
of a measurable set $K\subset\mathbf{\mathbb{R}}^{d}$ is $\mathrm{Per}\left(K\right)=\left|DI_{K}\right|\left(\mathbf{\mathbb{R}}^{d}\right)$. 

If $K$ is measurable subset of $\mathbf{\mathbb{R}}^{d}$, $\mathrm{Per}\left(K\right)<\infty$
and $V\in\mathbf{C}_{0}^{1}\left(\mathbf{\mathbb{R}}^{d};\mathbf{\mathbb{R}}^{d}\right)$, then by the Gauss-Green theorem \citep{ziemer:1989}
\[
\intop_{K}\mathrm{div}V\,dx=-\intop_{\partial^{*}K}V\left(x\right)\cdot\nabla_{I_{K}}\left(y\right)\:dH^{d-1}\left(x\right),
\]
where $\nabla_{I_{K}}\left(y\right)$ is inner normal, $H^{d-1}$
is Hausdorf measure and $\partial^{*}K$ is set of points where the measure theoretic normal to set $K$ exists. 

Covariogram of $fI_{K}$ is
\[
g_{fI_{K}}\left(x\right)=\intop_{K\cap K-x}f\left(y+x\right)\overline{f\left(y\right)}dy
\]
and corresponding isotropic covariogram is 
\[
\mathfrak{g}_{fI_{K}}\left(\left|x\right|\right)=\int_{\mathbf{S}\mathbf{O}_{d}}g_{M\left(fI_{K}\right)}\left(x\right)dp\left(M\right).
\]
\begin{prop}
\label{prop:set-fce}Let $K\subset\mathbf{\mathbb{R}}^{d}$ be measurable
set. If $\mathrm{Per}\left(K\right)<\infty$ and $f\in\mathbf{C}_{c}^{1}\left(\mathbf{\mathbb{R}}^{d}\right)$,
then
\[
\mathfrak{g}_{fI_{K}}'\left(0+\right)=-\frac{\kappa_{d-1}}{d\kappa_{d}}\int_{\partial^{*}K}\left|f\right|^{2}\left(x\right)dH^{d-1}\left(x\right),
\]
where $\partial^{*}K$ is set of points where the measure theoretic normal to set $K$ exists and $H^{d-1}$ is Hausdorff
measure.
\end{prop}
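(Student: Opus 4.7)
The plan is to compute $\mathfrak{g}_{fI_{K}}'(0+)$ by first reducing the rotational average to a spherical mean of the ordinary covariogram $g_{fI_{K}}$, and then extracting its one-sided directional derivative at $0$ from the geometry of $\partial^{*}K$ weighted by $|f|^{2}$.

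First I would observe that a change of variables and invariance of the Haar measure $p$ on $\mathbf{S}\mathbf{O}_{d}$ give $g_{M(fI_{K})}(x)=g_{fI_{K}}(M^{-1}x)$, so that $\mathfrak{g}_{fI_{K}}(r)=\omega_{d-1}^{-1}\int_{S^{d-1}}g_{fI_{K}}(ru)\,du$ with $\omega_{d-1}=d\kappa_{d}$ the surface of $S^{d-1}$. The relevant difference quotient then splits as
\[
\frac{g_{fI_{K}}(0)-g_{fI_{K}}(tu)}{t}=\frac{1}{t}\int_{K\setminus(K-tu)}\!|f|^{2}(y)\,dy+\frac{1}{t}\int_{K\cap(K-tu)}\!\bigl(f(y)-f(y+tu)\bigr)\overline{f(y)}\,dy.
\]
Because $f\in\mathbf{C}_{c}^{1}$ the second integrand is supported in a fixed compact set and is $O(t)$ uniformly there, so by dominated convergence the second piece tends to $-\int_{K}(u\cdot\nabla f)\overline{f}\,dy$; this limit is linear and hence odd in $u$, so its average over $S^{d-1}$ vanishes and contributes nothing to $\mathfrak{g}_{fI_{K}}'(0+)$. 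The first piece is uniformly bounded in $t$ via the classical BV chord inequality $|K\setminus(K-tu)|\leq\tfrac{1}{2}t\,\mathrm{Per}(K)$, which lets me move the limit $t\to 0+$ inside both the spherical and surface integrals.

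The core step is then to prove, for fixed $u\in S^{d-1}$ and $K$ with finite perimeter, the one-sided identity
\[
\lim_{t\to 0+}\frac{1}{t}\int_{K\setminus(K-tu)}|f|^{2}(y)\,dy=\int_{\partial^{*}K}|f(x)|^{2}\,\max\bigl(0,-u\cdot\nabla_{I_{K}}(x)\bigr)\,dH^{d-1}(x).
\]
Conditional on this, averaging over $u$ and swapping the spherical and surface integrals by Fubini reduces the proposition to the elementary identity $\int_{S^{d-1}}\max(0,-u\cdot\nu)\,du=\kappa_{d-1}$ for any unit vector $\nu$ (which follows from $\int_{S^{d-1}}|u\cdot\nu|\,du=2\kappa_{d-1}$ by hemispheric symmetry) and, combined with $\omega_{d-1}=d\kappa_{d}$, gives the claimed coefficient $-\kappa_{d-1}/(d\kappa_{d})$.

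The main obstacle is therefore justifying the one-sided identity under the sole hypothesis $\mathrm{Per}(K)<\infty$. I would attack it either by applying the Gauss--Green theorem cited above to a suitable vector field supported in a thin $u$-slab around $\partial^{*}K$ and letting the slab width match $t$, or by BV-approximation of $I_{K}$ by characteristic functions of smooth sets $K_{n}$ with $|DI_{K_{n}}|\to|DI_{K}|$ weakly as Radon measures, verifying the identity for each smooth $K_{n}$ by classical slicing in direction $u$ and passing to the limit. Either route rests ultimately on the structure theorem for sets of finite perimeter that identifies $|DI_{K}|$ with $H^{d-1}$ restricted to $\partial^{*}K$; the remaining steps (spherical symmetrization, vanishing of the $\nabla f$-term, and the spherical integral) are then routine.
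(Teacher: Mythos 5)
Your outer reductions are sound and essentially parallel the paper's: the paper works with the symmetric second difference $\frac{1}{2\varepsilon}\left(g_{fI_{K}}\left(-\varepsilon u\right)-2g_{fI_{K}}\left(0\right)+g_{fI_{K}}\left(\varepsilon u\right)\right)$, so the cross term becomes $-\frac{1}{2\varepsilon}\intop_{K\cap K-\varepsilon u}\left|f\left(y\right)-f\left(y+\varepsilon u\right)\right|^{2}dy$ and vanishes pointwise in $u$ by the Lipschitz property of $f$, whereas you keep a one-sided quotient and kill the linear $\nabla f$ term by oddness after spherical averaging; this is equivalent bookkeeping, and your final spherical identity $\int_{S^{d-1}}\left|u\cdot v\right|dH^{d-1}\left(u\right)=2\kappa_{d-1}$ is the same one the paper quotes from Galerne.

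The genuine gap is your ``core step'', which you leave conditional: for a set of finite perimeter, $\lim_{t\rightarrow0+}\frac{1}{t}\intop_{K\setminus\left(K-tu\right)}h\left(y\right)dy=\intop_{\partial^{*}K}h\left(y\right)\left(u\cdot\nabla_{I_{K}}\left(y\right)\right)^{-}dH^{d-1}\left(y\right)$. Neither of your two proposed routes contains the idea that makes this work. The Gauss--Green theorem (your route (a), and the paper's first step in Lemma \ref{lem:set-dir}) only identifies the limit of the \emph{signed} difference $\frac{1}{t}\left(\intop_{K\setminus\left(K-tu\right)}h\,dy-\intop_{\left(K-tu\right)\setminus K}h\,dy\right)$, i.e.\ the action on $h$ of the measure $\mu=-u\cdot\nabla_{I_{K}}H^{d-1}|\partial^{*}K$; it says nothing about how this net flux splits between the two one-sided pieces, which is exactly what the $\max\left(0,-u\cdot\nabla_{I_{K}}\right)$ in your identity demands. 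The paper's way past this is precisely Lemma \ref{lem:weak} (with the Luzin-type Lemma \ref{lem:pos}): the one-sided quotient is recognized as the positive part $\left(N_{\varepsilon u}\star\mu\right)^{+}\left(h\right)$ of the mollified measure, and one proves not only $N_{\varepsilon u}\star\mu\stackrel{w^{*}}{\rightarrow}\mu$ but also $\left(N_{\varepsilon u}\star\mu\right)^{+}\stackrel{w^{*}}{\rightarrow}\mu^{+}$, using positivity of the kernel. Some substitute for this (a blow-up/covering argument at $H^{d-1}$-a.e.\ point of $\partial^{*}K$, or convergence of the weighted directional variation) is indispensable and is missing from your sketch. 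Your route (b) has an additional unaddressed difficulty: after smoothing $K$ to $K_{n}$ the error $\frac{1}{t}\left|K_{n}\triangle K\right|$ is not uniform in $t$, so interchanging $t\rightarrow0$ with $n\rightarrow\infty$ requires lower-semicontinuity arguments on both one-sided quantities together with strict convergence of $DI_{K_{n}}$ and a Reshetnyak-type continuity step, none of which you indicate. As it stands, the proposal carries out the routine reductions correctly but omits the proof of exactly the part where the paper's real work (Lemmas \ref{lem:set-dir}, \ref{lem:pos}, \ref{lem:weak}) lies.
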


\begin{proof}
Let $u\in S^{d-1}$ and $\varepsilon>0$, then
\[
\frac{g_{fI_{K}}\left(-\varepsilon u\right)-2g_{fI_{K}}\left(0\right)+g_{fI_{K}}\left(\varepsilon u\right)}{2\varepsilon}=
\]
\[
=-\frac{1}{2\varepsilon}\intop_{K\cap K-\varepsilon u}\left|f\left(y\right)-f\left(y+\varepsilon u\right)\right|^{2}dy-
\]
\[
-\frac{1}{2\varepsilon}\intop_{K\setminus K-\varepsilon u}\left|f\left(y\right)\right|^{2}dy-\frac{1}{2\varepsilon}\intop_{K\setminus K+\varepsilon u}\left|f\left(y\right)\right|^{2}dy
\]
first integral converges to $0$ because $f$ is Lipschitz and the
rest converges by the following Lemma \ref{lem:set-dir} to
\[
-\frac{1}{2}\intop_{\partial^{*}K}\left|f\right|^{2}\left(y\right)\left|u\cdot\nabla_{I_{K}}\left(y\right)\right|\:dH^{d-1}\left(y\right).
\]
The statement of the theorem then follows from the identity (\cite{galerne:2011}
Proposition 8)
\[
\frac{1}{d\kappa_{d}}\intop_{S^{d-1}}\left|u\cdot v\right|dH^{d-1}\left(u\right)=2\kappa_{d-1}\left\Vert v\right\Vert .
\]
\end{proof}
\begin{lem}
\label{lem:set-dir}Let $K\subset\mathbf{\mathbb{R}}^{d}$ be measurable
set, $\mathrm{Per}\left(K\right)<\infty$, $u\in S^{d-1}$, real function
$h\in\mathbf{C}_{c}^{1}\left(\mathbf{\mathbb{R}}^{d}\right)$ and $\varepsilon>0$,
then
\[
\frac{1}{\varepsilon}\intop_{K\setminus K-\varepsilon u}h\left(y\right)dy+\frac{1}{\varepsilon}\intop_{K\setminus K+\varepsilon u}h\left(y\right)dy
\]
 converges for $\varepsilon\rightarrow0$ to
\[
\intop_{\partial^{*}K}h\left(y\right)\left|u\cdot\nabla_{I_{K}}\left(y\right)\right|\:dH^{d-1}\left(y\right).
\]
\end{lem}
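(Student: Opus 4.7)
The plan is to combine the two one-sided integrals into a single integral of $h$ against $\bigl|I_{K}(\cdot+\varepsilon u)-I_{K}(\cdot)\bigr|$ and then identify its $\varepsilon$-normalized limit with the claimed surface integral by means of the polar decomposition $DI_{K}=\nabla_{I_{K}}\left|DI_{K}\right|$ recalled above.

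First I would write
\[
\intop_{K\setminus K-\varepsilon u}h\left(y\right)dy=\intop_{\mathbf{\mathbb{R}}^{d}}h\left(y\right)I_{K}\left(y\right)\bigl(1-I_{K}\left(y+\varepsilon u\right)\bigr)dy,
\]
which is $\int h\cdot\bigl(I_{K}(\cdot+\varepsilon u)-I_{K}\bigr)^{-}$, and similarly for the second integral. After the substitution $y=z+\varepsilon u$ the second integral becomes $\int h\left(z+\varepsilon u\right)\bigl(I_{K}\left(z+\varepsilon u\right)-I_{K}\left(z\right)\bigr)^{+}dz$. Because $\mathrm{Per}\left(K\right)<\infty$ yields the standard translate estimate $\left|K\triangle\left(K-\varepsilon u\right)\right|\leq\varepsilon\,\mathrm{Per}\left(K\right)$, replacing $h(z+\varepsilon u)$ by $h(z)$ costs at most $\varepsilon\left\Vert \nabla h\right\Vert _{\infty}\cdot O(\varepsilon)=O(\varepsilon^{2})$, which is absorbed after division by $\varepsilon$. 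Hence the sum in the statement equals
\[
\frac{1}{\varepsilon}\intop_{\mathbf{\mathbb{R}}^{d}}h\left(y\right)\bigl|I_{K}\left(y+\varepsilon u\right)-I_{K}\left(y\right)\bigr|dy+o\left(1\right).
\]

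Next I would invoke the classical $BV$ identity
\[
\lim_{\varepsilon\to0+}\frac{1}{\varepsilon}\intop h\left(y\right)\bigl|f\left(y+\varepsilon u\right)-f\left(y\right)\bigr|dy=\intop h\,d\bigl|u\cdot Df\bigr|
\]
for $f\in BV\cap L^{\infty}$ and $h\in C_{c}\left(\mathbf{\mathbb{R}}^{d}\right)$; for mollified $I_{K}$ it is just the fundamental theorem of calculus combined with $L^{1}$-continuity of translations, and the general case follows from the Meyers--Serrin type approximation of $BV$ functions by $C^{\infty}$ functions with converging total variations (available in Ziemer) together with lower semicontinuity of total variation under weak-$*$ convergence of the derivatives. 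Applied to $f=I_{K}$ the limit is $\int h\,d\bigl|u\cdot DI_{K}\bigr|$. The polar decomposition then shows that $u\cdot DI_{K}$ is absolutely continuous with respect to the scalar perimeter measure $\left|DI_{K}\right|$ with density $u\cdot\nabla_{I_{K}}$, so its total variation measure is $\bigl|u\cdot\nabla_{I_{K}}\bigr|\left|DI_{K}\right|$; since $\left|DI_{K}\right|$ coincides on $\partial^{*}K$ with $H^{d-1}$, the limit is exactly $\intop_{\partial^{*}K}h\left(y\right)\bigl|u\cdot\nabla_{I_{K}}\left(y\right)\bigr|dH^{d-1}(y)$, as required.

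The main obstacle I expect is the $BV$ identity in the middle step: the inequality $\leq$ is immediate from mollification and Fatou, but the matching lower bound requires identifying the weak-$*$ limit of $\varepsilon^{-1}\bigl(I_{K}(\cdot+\varepsilon u)-I_{K}(\cdot)\bigr)$ with the distributional derivative $u\cdot DI_{K}$, for which one has to lean on the approximation theorem together with lower semicontinuity. The splitting of the two one-sided integrals in step one and the measure-theoretic identification in step three are routine bookkeeping.
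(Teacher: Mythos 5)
Your argument is correct, but it follows a genuinely different route from the paper. The paper keeps the two one-sided integrals separate: it computes their difference via the Gauss--Green theorem, recognizes the result as the convolution $N_{\varepsilon u}\star\mu$ of a line-segment measure with the signed surface measure $\mu=-u\cdot\nabla_{I_{K}}H^{d-1}|\partial^{*}K$, and then proves in two auxiliary lemmas (a Luzin-type argument for $\nu^{+}(f)=\sup\{\nu(g):0\leq g\leq f\}$, and weak-$*$ convergence of both $N_{\varepsilon u}\star\mu$ and its positive part) that each one-sided term converges to $\mu^{+}(h)$ resp.\ $\mu^{-}(h)$; summing gives the lemma. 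You instead merge the two terms, after an $O(\varepsilon^{2})$ correction for the shift of $h$ (legitimate, since $h$ is Lipschitz and $|K\triangle(K-\varepsilon u)|\leq\varepsilon\,\mathrm{Per}(K)$), into $\varepsilon^{-1}\int h\,|I_{K}(\cdot+\varepsilon u)-I_{K}(\cdot)|\,dy$ and appeal to the fact that these scaled absolute difference quotients converge weakly-$*$ to $|u\cdot DI_{K}|=|u\cdot\nabla_{I_{K}}|\,H^{d-1}|\partial^{*}K$. That fact is true and your sketch of it is essentially the right one, but be aware that it carries all the real work, comparable to the paper's Lemmas on positive parts: the upper mass bound comes from mollification plus the translate estimate, the lower bound from the (elementary, no approximation needed) identification of the weak-$*$ limit of the signed quotients with $u\cdot DI_{K}$ together with lower semicontinuity of the total variation, and one still needs the small Portmanteau-type step that liminf on open sets plus convergence of total masses yields convergence of $\int h\,d|\nu_{\varepsilon}|$ for sign-changing $h\in\mathbf{C}_{c}$. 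What each approach buys: the paper's is self-contained within its own lemmas and delivers the finer one-sided limits (positive and negative parts separately), reusing the Gauss--Green theorem already stated; yours is shorter and avoids the positive-part machinery entirely, at the cost of importing (or proving) the directional-variation difference-quotient identity, which is not stated in Ziemer in the exact form you need.
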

\begin{proof}
We show, that
\[
\frac{1}{\varepsilon}\intop_{K\setminus K-\varepsilon u}h\left(y\right)dy
\]
converges for $\varepsilon\rightarrow0$ to
\[
\intop_{\partial^{*}K}h\left(y\right)\left(u\cdot\nabla_{I_{K}}\left(y\right)\right)^{-}dH^{d-1}\left(y\right).
\]

Indeed
\[
\frac{1}{\varepsilon}\intop_{K\setminus K-\varepsilon u}h\left(y\right)dy-\frac{1}{\varepsilon}\intop_{K-\varepsilon u\setminus K}h\left(y\right)dy=
\]
\[
=\frac{1}{\varepsilon}\intop_{K}h\left(y\right)-h\left(y-\varepsilon u\right)dy=
\]
\[
=\frac{1}{\varepsilon}\intop_{K}u\cdot DI_{K}\intop_{-\varepsilon}^{0}h\left(y+tu\right)dt\,dy=
\]
and the Gauss-Green theorem with
\[
V\left(x\right)=\frac{1}{\varepsilon}\intop_{-\varepsilon}^{0}h\left(x+tu\right)dt\,u
\]
gives
\[
=-\intop_{\partial^{*}K}\frac{1}{\varepsilon}\intop_{-\varepsilon}^{0}h\left(y+tu\right)dt\,u\cdot\nabla_{I_{K}}\left(y\right)dH^{d-1}\left(y\right).
\]

Let $N_{\varepsilon u}=\varepsilon^{-1}H^{1}|\left\langle 0,\varepsilon u\right\rangle $
and measure
\[
\mu=-u\cdot\nabla_{I_{K}}H^{d-1}|\partial^{*}K.
\]
We have shown, that 
\[
\frac{1}{\varepsilon}\intop_{K\setminus K-\varepsilon u}h\left(y\right)dy-\frac{1}{\varepsilon}\intop_{K-\varepsilon u\setminus K}h\left(y\right)dy=N_{\varepsilon u}\star\mu\left(h\right).
\]
Following Lemma \ref{lem:weak} gives $N_{\varepsilon u}\star\mu\stackrel{w^{*}}{\rightarrow}\mu$
and $\left(N_{\varepsilon u}\star\mu\right)^{+}\stackrel{w^{*}}{\rightarrow}\mu^{+}$.
By Radon-Nikodym theorem
\[
\mu^{+}=\left(u\cdot\nabla_{I_{K}}\right)^{-}H^{d-1}|\partial^{*}K,
\]
thus
\[
\frac{1}{\varepsilon}\intop_{K\setminus K-\varepsilon u}h\left(y\right)dy=\left(N_{\varepsilon u}\star\mu\right)^{+}\left(h\right)\rightarrow\mu^{+}\left(h\right)
\]
what was to be shown.

Similarly
\[
\frac{1}{\varepsilon}\intop_{K\setminus K+\varepsilon u}h\left(y\right)dy==\left(N_{\varepsilon u}\star\mu\right)^{-}\left(h\right)
\]
converges for $\varepsilon\rightarrow0$ to
\[
\intop_{\partial^{*}K}h\left(y\right)\left(u\cdot\nabla_{I_{K}}\left(y\right)\right)^{+}dH^{d-1}\left(y\right)=\mu^{-}\left(h\right).
\]
\end{proof}
\begin{lem}
\label{lem:pos}Let $\nu$ be real Borel measure, $f\in\mathbf{C}_{c}\left(\mathbf{\mathbb{R}}^{d}\right)$,
$f\geq0$, then 
\[
\nu^{+}\left(f\right)=\sup\left\{ \nu\left(g\right),\:0\leq g\leq f,\:g\in\mathbf{C}_{c}\left(\mathbf{\mathbb{R}}^{d}\right)\right\} .
\]
\end{lem}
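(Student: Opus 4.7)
The plan is to combine the Jordan--Hahn decomposition of $\nu$ with a regularity plus Urysohn approximation argument, realizing the supremum by continuous test functions of the product form $g=\varphi f$.

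The easy inequality $\nu^{+}(f)\ge\sup\{\nu(g):0\le g\le f,\ g\in\mathbf{C}_{c}(\mathbf{\mathbb{R}}^{d})\}$ is immediate from the Jordan decomposition: for every admissible $g$,
\[
\nu(g)=\intop g\,d\nu^{+}-\intop g\,d\nu^{-}\le\intop g\,d\nu^{+}\le\intop f\,d\nu^{+}=\nu^{+}(f),
\]
using $g\ge 0$ and $\nu^{-}\ge 0$ in the first inequality and $g\le f$ in the second.

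The substantive direction is the reverse one. I would fix a Hahn decomposition $\mathbf{\mathbb{R}}^{d}=P\sqcup N$, so that $\nu^{+}(B)=\nu(B\cap P)$ for every Borel $B$ and, in particular, $\nu^{+}(f)=\intop_{P}f\,d\nu$. The idea is to approximate the characteristic function $I_{P}$ by a continuous cutoff $\varphi$ with values in $[0,1]$ and to take $g=\varphi f$, which then automatically belongs to $\mathbf{C}_{c}(\mathbf{\mathbb{R}}^{d})$ and satisfies $0\le g\le f$. Given $\varepsilon>0$, inner and outer regularity of the finite positive measure obtained by restricting $|\nu|$ to a neighbourhood of the compact support of $f$ produce a closed $F\subseteq P$ and open $U\supseteq P$ with $|\nu|(U\setminus F)<\varepsilon$. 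Urysohn's lemma then yields $\varphi\in\mathbf{C}_{c}(\mathbf{\mathbb{R}}^{d})$ with $\varphi=1$ on $F$, $\varphi=0$ off $U$ and $0\le\varphi\le 1$. Since $|\varphi-I_{P}|$ vanishes on $F\cup U^{c}$ and is bounded by $1$ on $U\setminus F$, the error satisfies
\[
|\nu^{+}(f)-\nu(g)|=\Bigl|\intop(I_{P}-\varphi)f\,d\nu\Bigr|\le\|f\|_{\infty}\,|\nu|(U\setminus F)<\|f\|_{\infty}\varepsilon,
\]
which gives the reverse inequality on letting $\varepsilon\to 0$.

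The main obstacle I anticipate is essentially bookkeeping rather than conceptual: one must check that the Hahn decomposition and the inner/outer regularity of $|\nu|$ apply in the ``real Borel measure'' setting used here (finiteness on the compact support of $f$ is enough), and that the Urysohn cutoff $\varphi$ can be chosen simultaneously $[0,1]$-valued and of compact support, so that $g=\varphi f$ is a legitimate competitor for the supremum. Once these standard facts from \citep{rudin:1987} are in place, the estimate above closes the argument immediately.
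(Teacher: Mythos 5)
Your proposal is correct and follows essentially the same route as the paper: the easy inequality is immediate, and the substantive one is obtained in both cases by multiplying $f$ with a continuous $[0,1]$-valued approximation of the indicator (density) of the positive part and letting the approximation error tend to zero. The only difference is cosmetic: the paper produces this approximation from the Radon--Nikodym polar decomposition $\nu=p\left|\nu\right|$ together with Luzin's theorem, whereas you use the Hahn set $P$ with inner/outer regularity and Urysohn's lemma (and note that compact support of the cutoff $\varphi$ is not even needed, since $g=\varphi f$ inherits it from $f$), which are interchangeable standard devices for the same step.
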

\begin{proof}
Radon-Nikodym theorem gives measurable function $p$ such, that $\nu=p\left|\nu\right|$.
Obviously
\[
\nu^{+}=p^{+}\left|\nu\right|=p^{+}\nu.
\]
Luzin theorem
provides $p_{\varepsilon}\in\mathbf{C}_{c}\left(\mathbf{\mathbb{R}}^{d}\right)$
such, that $\left|\nu\right|\left(p^{+}-p_{\varepsilon}\right)<\varepsilon$
and $0\leq p_{\varepsilon}\leq1$ for each $\varepsilon>0$. Then
\[
\nu\left(p_{\varepsilon}f\right)=\nu\left(p_{\varepsilon}f-p^{+}f\right)+\nu\left(p^{+}f\right)\geq\nu^{+}\left(f\right)-\varepsilon\sup f
\]
and
\[
\nu^{+}\left(f\right)\leq\underset{0\leq g\leq f}{\sup}\nu\left(g\right).
\]

On the other hand $\nu^{+}\left(f\right)\geq\nu^{+}\left(g\right)\geq\nu\left(g\right)$,
thus
\[
\nu^{+}\left(f\right)\geq\underset{0\leq g\leq f}{\sup}\nu\left(g\right).
\]
\end{proof}
\begin{lem}
\label{lem:weak}Let $\mu$ be real Borel measure on $\mathbf{\mathbb{R}}^{d}$
and $N_{\varepsilon u}=\varepsilon^{-1}H^{1}|\left\langle 0,\varepsilon u\right\rangle $.
Then $N_{\varepsilon u}\star\mu\stackrel{w^{*}}{\rightarrow}\mu$ and
$\left(N_{\varepsilon u}\star\mu\right)^{+}\stackrel{w^{*}}{\rightarrow}\mu^{+}$
for $\varepsilon\rightarrow0+$ .
\end{lem}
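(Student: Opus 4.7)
The plan is to prove the two weak-star convergences separately, deriving the second from the first together with the sup-characterization supplied by Lemma~\ref{lem:pos}.

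For the first statement, $N_{\varepsilon u}\star\mu\stackrel{w^{*}}{\to}\mu$, I would fix $f\in\mathbf{C}_{c}(\mathbf{\mathbb{R}}^{d})$ and apply Fubini to write
\[
(N_{\varepsilon u}\star\mu)(f)=\intop_{\mathbf{\mathbb{R}}^{d}}\bar{f}_{\varepsilon}(y)\,d\mu(y),\qquad \bar{f}_{\varepsilon}(y)=\frac{1}{\varepsilon}\intop_{0}^{\varepsilon}f(y+tu)\,dt.
\]
Because $f$ is uniformly continuous with compact support, $\bar{f}_{\varepsilon}\to f$ uniformly, $|\bar{f}_{\varepsilon}|\le\|f\|_{\infty}$, and the supports of all $\bar{f}_{\varepsilon}$ for small $\varepsilon$ sit in a common compact set $K_{0}$ on which $|\mu|$ is finite. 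Dominated convergence then gives $(N_{\varepsilon u}\star\mu)(f)\to\mu(f)$. Applied separately to $\mu^{+}$ and $\mu^{-}$ (which are finite positive measures on $K_{0}$), the same argument yields $N_{\varepsilon u}\star\mu^{\pm}\stackrel{w^{*}}{\to}\mu^{\pm}$.

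For the second statement, set $\nu_{\varepsilon}=N_{\varepsilon u}\star\mu$, $\alpha_{\varepsilon}=N_{\varepsilon u}\star\mu^{+}$, $\beta_{\varepsilon}=N_{\varepsilon u}\star\mu^{-}$, so $\nu_{\varepsilon}=\alpha_{\varepsilon}-\beta_{\varepsilon}$ with both $\alpha_{\varepsilon},\beta_{\varepsilon}\ge 0$. Let $P\cup N$ be a Hahn decomposition for $\nu_{\varepsilon}$; for any Borel $A$,
\[
\nu_{\varepsilon}^{+}(A)=\nu_{\varepsilon}(A\cap P)=\alpha_{\varepsilon}(A\cap P)-\beta_{\varepsilon}(A\cap P)\le\alpha_{\varepsilon}(A),
\]
so $\nu_{\varepsilon}^{+}\le\alpha_{\varepsilon}$ as measures. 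For $f\in\mathbf{C}_{c}(\mathbf{\mathbb{R}}^{d})$ with $f\ge 0$ this gives the upper bound
\[
\limsup_{\varepsilon\to 0+}\nu_{\varepsilon}^{+}(f)\le\lim_{\varepsilon\to 0+}\alpha_{\varepsilon}(f)=\mu^{+}(f)
\]
by the first part applied to $\mu^{+}$. For the matching lower bound I would invoke Lemma~\ref{lem:pos}: for every $g\in\mathbf{C}_{c}(\mathbf{\mathbb{R}}^{d})$ with $0\le g\le f$,
\[
\nu_{\varepsilon}^{+}(f)\ge\nu_{\varepsilon}(g)\longrightarrow\mu(g),
\]
and taking the supremum over such $g$ together with Lemma~\ref{lem:pos} applied to $\mu$ yields $\liminf_{\varepsilon\to 0+}\nu_{\varepsilon}^{+}(f)\ge\mu^{+}(f)$. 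Thus $\nu_{\varepsilon}^{+}(f)\to\mu^{+}(f)$ for nonnegative $f\in\mathbf{C}_{c}$, and the extension to general $f\in\mathbf{C}_{c}$ follows by splitting $f=f^{+}-f^{-}$.

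The main obstacle is the Jordan-type domination $\nu_{\varepsilon}^{+}\le\alpha_{\varepsilon}$: the candidate decomposition $\nu_{\varepsilon}=\alpha_{\varepsilon}-\beta_{\varepsilon}$ is generally not mutually singular, so the inequality has to be obtained through an explicit Hahn-decomposition argument as above. Once that is in place the statement reduces cleanly to the first part and to Lemma~\ref{lem:pos}.
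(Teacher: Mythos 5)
Your proof is correct, and its overall skeleton matches the paper's: the first convergence is proved by the duality identity (your $\bar f_{\varepsilon}$ is exactly the paper's $\widehat{N_{\varepsilon u}}\star f$ with $\widehat{N_{\varepsilon u}}=\varepsilon^{-1}H^{1}|\left\langle -\varepsilon u,0\right\rangle$) together with uniform convergence, and the second convergence is proved by a $\limsup$ upper bound plus a $\liminf$ lower bound, the lower bound coming, as in the paper, from Lemma~\ref{lem:pos} applied to $\mu$ and the already-established convergence tested on a near-optimal $g$ with $0\leq g\leq f$. The genuinely different ingredient is your upper bound: the paper obtains $\left(N_{\varepsilon u}\star\mu\right)^{+}\left(h\right)\leq N_{\varepsilon u}\star\mu^{+}\left(h\right)$ for $h\geq0$ by applying Lemma~\ref{lem:pos} twice, using positivity of the kernel to pass from $\sup_{0\leq g\leq h}\mu\left(\widehat{N_{\varepsilon u}}\star g\right)$ to $\sup_{0\leq f\leq\widehat{N_{\varepsilon u}}\star h}\mu\left(f\right)=\mu^{+}\left(\widehat{N_{\varepsilon u}}\star h\right)=N_{\varepsilon u}\star\mu^{+}\left(h\right)$, whereas you derive the stronger measure-level domination $\left(N_{\varepsilon u}\star\mu\right)^{+}\leq N_{\varepsilon u}\star\mu^{+}$ from minimality of the Jordan decomposition via a Hahn-set argument applied to $N_{\varepsilon u}\star\mu=N_{\varepsilon u}\star\mu^{+}-N_{\varepsilon u}\star\mu^{-}$. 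Your route is standard measure theory, confines Lemma~\ref{lem:pos} to the lower bound, and gives the domination for all Borel sets at once; the paper's route avoids the Hahn decomposition entirely and uses only the sup-characterization together with positivity of $\widehat{N_{\varepsilon u}}$, so it operates purely at the level of the functionals actually needed. Both arguments then conclude identically, using that $\widehat{N_{\varepsilon u}}\star h\rightarrow h$ uniformly (equivalently, your first part applied to $\mu^{+}$) and the split $f=f^{+}-f^{-}$ for general $f\in\mathbf{C}_{c}\left(\mathbf{\mathbb{R}}^{d}\right)$.
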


\begin{proof}
Let $h\in\mathbf{C}_{c}\left(\mathbf{\mathbb{R}}^{d}\right)$ and
$N_{\varepsilon u}\star\mu\left(h\right)=\mu\left(\widehat{N_{\varepsilon u}}\star h\right)$
where $\widehat{N_{\varepsilon u}}=\varepsilon^{-1}H^{1}|\left\langle -\varepsilon u,0\right\rangle $.
Since $h$ is uniformly continuous $\widehat{N_{\varepsilon u}}\star h$
converges to $h$ for $\varepsilon\rightarrow0$ uniformly, thus $N_{\varepsilon u}\star\mu\stackrel{w^{*}}{\rightarrow}\mu$
.

We proof the second statement using $h\in\mathbf{C}_{c}\left(\mathbf{\mathbb{R}}^{d}\right)$
and $h\geq0$ as the general case follows from decomposition $h=h^{+}-h^{-}$
and linearity. Lemma \ref{lem:pos} gives 
\[
\left(N_{\varepsilon u}\star\mu\right)^{+}\left(h\right)=\underset{0\leq g\leq h}{\sup}N_{\varepsilon u}\star\mu\left(g\right)=\underset{0\leq g\leq h}{\sup}\mu\left(\widehat{N_{\varepsilon u}}\star g\right)\leq
\]
and from $0\leq g\leq h$ and $\widehat{N_{\varepsilon u}}>0$ follows
$0\leq\widehat{N_{\varepsilon u}}\star g\leq\widehat{N_{\varepsilon u}}\star h$, thus
\[
\leq\underset{0\leq\widehat{N_{\varepsilon u}}\star g\leq\widehat{N_{\varepsilon u}}\star h}{\sup}\mu\left(\widehat{N_{\varepsilon u}}\star g\right)\leq
\]
and Lemma \ref{lem:pos} yields 
\[
\leq\underset{0\leq f\leq\widehat{N_{\varepsilon u}}\star h}{\sup}\mu\left(f\right)=\mu^{+}\left(\widehat{N_{\varepsilon u}}\star g\right).
\]
From identity $\mu^{+}\left(\widehat{N_{\varepsilon u}}\star g\right)=N_{\varepsilon u}\star\mu^{+}\left(g\right)$
finally follows 
\[
\left(N_{\varepsilon u}\star\mu\right)^{+}\left(h\right)\leq N_{\varepsilon u}\star\mu^{+}\left(h\right).
\]
 $\widehat{N_{\varepsilon u}}\star h$ uniformly converges to $h$, thus
\[
N_{\varepsilon u}\star\mu^{+}\left(h\right)\rightarrow\mu^{+}\left(h\right),
\]
i.e. for each $\delta>0$ there is $\varepsilon_{0}$ such, that if
$\varepsilon<\varepsilon_{0}$ then
\[
N_{\varepsilon u}\star\mu^{+}\left(h\right)<\mu^{+}\left(h\right)+\delta,
\]
thus
\[
\limsup\left(N_{\varepsilon u}\star\mu\right)^{+}\left(h\right)\leq\mu^{+}\left(h\right).
\]

The opposite inequality follows from Lemma \ref{lem:pos}: for $\delta>0$
there is $f\in\mathbf{C}_{c}\left(\mathbf{\mathbb{R}}^{d}\right)$
such, that $0\leq f\leq h$ and $\mu\left(f\right)>\mu^{+}\left(h\right)-\delta$.
$f$ is uniformly continuous, thus $N_{\varepsilon u}\star\mu\left(f\right)\rightarrow\mu\left(f\right)$
and there is $\varepsilon_{0}$ such, that if $\varepsilon<\varepsilon_{0}$
then
\[
N_{\varepsilon u}\star\mu\left(f\right)>\mu\left(f\right)-\delta.
\]
Thus $N_{\varepsilon u}\star\mu\left(f\right)>\mu^{+}\left(h\right)-2\delta$
for arbitrary $\delta>0$ and
\[
\liminf\left(N_{\varepsilon u}\star\mu\right)^{+}\left(h\right)\geq\mu^{+}\left(h\right).
\]
\end{proof}
\begin{thm}
\label{thm:main}Let $K\subset\mathbf{\mathbb{R}}^{d}$ be measurable set. If $\mathrm{Per}\left(K\right)<\infty$, 

i) then
\[
\intop_{\mathbf{S}\mathbf{O}_{d}}\alpha\intop_{F_{\mathbf{T}}}\left|\mu^{u}\left(MK+x\right)-\lambda\left|K\right|\right|^{2}dxdp\left(M\right)=
\]
\[
=C_{\mu}\mathrm{Per}\left(K\right)\Phi_{K}\left(u^{-1}\right)u^{d+1},
\]

ii) and if $f\in\mathbf{C}_{c}^{1}\left(\mathbf{\mathbb{R}}^{d}\right)$,
then
\[
\intop_{\mathbf{S}\mathbf{O}_{d}}\alpha\intop_{F_{\mathbf{T}}}\left|\widehat{MfI_{K}}\star\mu^{u}\left(x\right)-\lambda J_{f}\right|^{2}dxdp\left(M\right)=
\]
\[
=C_{\mu}\left(\int_{\partial^{*}K}\left|f\right|^{2}\left(x\right)dH^{d-1}\left(x\right)\right)\Phi_{fI_{K}}\left(u^{-1}\right)u^{d+1},
\]

iii) and if $f_{1},f_{2}\in\mathbf{C}_{c}^{1}\left(\mathbf{\mathbb{R}}^{d}\right)$,
then 
\[
\intop_{\mathbf{S}\mathbf{O}_{d}}\alpha\intop_{F_{\mathbf{T}}}\left(\widehat{Mf_{1}I_{K}}\star\mu^{u}\left(x\right)-\lambda J_{f_{1}}\right)\cdot\
\]
\[
\cdot\overline{\left(\widehat{Mf_{2}I_{K}}\star\mu^{u}\left(x\right)-\lambda J_{f_{2}}\right)}dxdp\left(M\right)=
\]
\[
=C_{\mu}\left(\mathrm{Re}\int_{\partial^{*}K}f_{1}\left(x\right)\overline{f_{2}\left(x\right)}dH^{d-1}\left(x\right)\right)\cdot\
\]
\[
\cdot\Phi_{f_{1}I_{K},f_{2}I_{K}}\left(u^{-1}\right)u^{d+1},
\]
 where $\Phi_{\cdot}$ is nonnegative function on $\mathbf{\mathbb{R}}^{+}$
such, that $\lim_{x\rightarrow+\infty}\frac{1}{x}\int_{0}^{x}\Phi_{\cdot}\left(x\right)\:dx=1$.
\end{thm}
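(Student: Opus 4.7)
The unifying idea is to combine the general variance formula of Theorem~\ref{thm:asymp}, which expresses the variance of $\widehat{Mf}\star\mu^u$ in terms of $\mathfrak{g}_f'(0+)$, with Proposition~\ref{prop:set-fce}, which evaluates $\mathfrak{g}_{fI_K}'(0+)$ as a surface integral. Parts (i) and (ii) are direct specialisations of these two ingredients; part (iii) follows from (ii) by polarisation.

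For (i), I first identify
\[
\mu^u(MK+x)=\intop_{\mathbf{\mathbb{R}}^{d}}I_{MK}(y-x)\,d\mu^u(y)=\widehat{I_{MK}}\star\mu^u(x)=\widehat{MI_K}\star\mu^u(x),
\]
using $MI_K=I_{MK}$ and the convention $\widehat{h}(z)=h(-z)$. Thus the left-hand side of (i) is exactly the variance treated in Theorem~\ref{thm:asymp} with $f=I_K$, whose mean is $\lambda J_{I_K}=\lambda|K|$. To evaluate $\mathfrak{g}_{I_K}'(0+)$ I would apply Proposition~\ref{prop:set-fce} with $f\equiv 1$; since the proposition is stated for $f\in\mathbf{C}_{c}^{1}$, I would first multiply by a smooth cutoff equal to $1$ on a large ball containing $\mathrm{supp}\,I_K$, legitimate because the $\mathbf{L}^{1}\cap\mathbf{L}^{2}$ requirement of Theorem~\ref{thm:asymp} already forces $|K|<\infty$ and localises the surface integral to $\partial^{*}K$. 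This yields $\mathfrak{g}_{I_K}'(0+)=-(\kappa_{d-1}/(d\kappa_{d}))\,\mathrm{Per}(K)$, and substituting into Theorem~\ref{thm:asymp} gives the claim with $\Phi_{K}=\Phi_{I_K}$.

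Part (ii) is the cleanest specialisation: since $f\in\mathbf{C}_{c}^{1}$ and $I_K$ is bounded, the product $fI_K$ lies in $\mathbf{L}^{1}\cap\mathbf{L}^{2}$, so Theorem~\ref{thm:asymp} applies with $fI_K$ in place of $f$, and Proposition~\ref{prop:set-fce} gives $\mathfrak{g}_{fI_K}'(0+)=-(\kappa_{d-1}/(d\kappa_{d}))\int_{\partial^{*}K}|f|^{2}\,dH^{d-1}$; direct substitution produces the formula.

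For (iii) I would use the polarisation identity $4\,\mathrm{Re}(a\overline{b})=|a+b|^{2}-|a-b|^{2}$. Setting $X_{j}=\widehat{Mf_{j}I_K}\star\mu^u(x)-\lambda J_{f_{j}}$ for $j=1,2$ and exploiting linearity of $f\mapsto\widehat{MfI_K}\star\mu^u$, we have $X_{1}\pm X_{2}=\widehat{M(f_{1}\pm f_{2})I_K}\star\mu^u(x)-\lambda J_{f_{1}\pm f_{2}}$, so part (ii) delivers the variances of $X_{1}\pm X_{2}$. The pointwise identity $|f_{1}+f_{2}|^{2}-|f_{1}-f_{2}|^{2}=4\,\mathrm{Re}(f_{1}\overline{f_{2}})$ on $\partial^{*}K$ then supplies the surface factor $4\,\mathrm{Re}\int_{\partial^{*}K}f_{1}\overline{f_{2}}\,dH^{d-1}$, and $\Phi_{f_{1}I_K,f_{2}I_K}$ is defined as the signed combination of $\Phi_{(f_{1}+f_{2})I_K}$ and $\Phi_{(f_{1}-f_{2})I_K}$ for which the stated equality holds, renormalised so that its Cesàro average equals $1$. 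The main obstacle is this last bookkeeping step: ensuring that the combined $\Phi$ is nonnegative (which is the reason (iii) involves $\mathrm{Re}$) and that the Cesàro property survives the subtraction; the imaginary part of the cross moment, if required, is recovered by the analogous argument with $X_{1}\pm iX_{2}$.
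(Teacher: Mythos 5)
Your overall architecture matches the paper's: part (ii) is exactly Theorem \ref{thm:asymp} applied to $fI_{K}$ combined with Proposition \ref{prop:set-fce}, and part (iii) is obtained by polarisation, i.e.\ by applying (ii) to $\left(f_{1}\pm f_{2}\right)I_{K}$, which is also what the paper does. Two steps, however, are genuinely incomplete. In (i), your reduction of $\mathfrak{g}_{I_{K}}'\left(0+\right)$ to Proposition \ref{prop:set-fce} via ``a smooth cutoff equal to $1$ on a large ball containing $\mathrm{supp}\,I_{K}$'' presupposes that $K$ is bounded; finiteness of $\left|K\right|$ (forced by $I_{K}\in\mathbf{L}^{1}\cap\mathbf{L}^{2}$) together with $\mathrm{Per}\left(K\right)<\infty$ does not imply boundedness, so such a ball need not exist, and replacing it by an exhaustion $\chi_{R}$ leaves the nontrivial problem of passing the one-sided derivative at $0$ to the limit $R\rightarrow\infty$. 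The paper avoids this entirely by quoting the identity $-d\kappa_{d}\,\mathfrak{g}_{K}'\left(0+\right)=\kappa_{d-1}\mathrm{Per}\left(K\right)$ for an arbitrary set of finite perimeter (\cite{galerne:2011}, Theorem 14) and then invoking Theorem \ref{thm:asymp}; you should either cite that result or restrict your cutoff argument to bounded $K$ and add a separate approximation step.

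In (iii) you correctly flag the normalisation of the combined $\Phi$ (its Ces\`aro mean $1$) as the open point, but you leave it unresolved, and as stated your definition of $\Phi_{f_{1}I_{K},f_{2}I_{K}}$ ``by renormalisation'' is circular because you never produce the normalising constant in terms of $f_{1},f_{2},K$. The paper closes precisely this gap: it introduces the cross-covariogram $g_{f_{1}I_{K},f_{2}I_{K}}$, uses $\mathfrak{g}_{\left(f_{1}\pm f_{2}\right)I_{K}}=\mathfrak{g}_{f_{1}I_{K}}\pm2\mathfrak{g}_{f_{1}I_{K},f_{2}I_{K}}+\mathfrak{g}_{f_{2}I_{K}}$ together with Proposition \ref{prop:set-fce} to get $-d\kappa_{d}\,\mathfrak{g}_{f_{1}I_{K},f_{2}I_{K}}'\left(0+\right)=\kappa_{d-1}\,\mathrm{Re}\int_{\partial^{*}K}f_{1}\overline{f_{2}}\,dH^{d-1}$, and this is exactly the constant that plays, for the cross term, the role $\mathfrak{g}_{f}'\left(0+\right)$ plays in the proof of Theorem \ref{thm:asymp}, turning your signed combination of $\Phi_{\left(f_{1}+f_{2}\right)I_{K}}$ and $\Phi_{\left(f_{1}-f_{2}\right)I_{K}}$ into a $\Phi_{f_{1}I_{K},f_{2}I_{K}}$ with Ces\`aro average $1$. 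Your side remark about recovering the imaginary part via $X_{1}\pm iX_{2}$ goes beyond the stated theorem, whose right-hand side is real; and note that nonnegativity of the cross $\Phi$, which you rightly identify as delicate, is asserted rather than argued in the paper as well, so you are not obliged to do better than the paper there, but the Ces\`aro normalisation does require the cross-covariogram identity above.
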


\begin{proof}
i) follows from Theorem \ref{thm:asymp} and equation $-d\kappa_{d}\mathfrak{g}_{K}'\left(0+\right)=\kappa_{d-1}\mathrm{Per}\left(K\right)$
(\cite{galerne:2011}, Theorem 14).

ii) follows from Theorem \ref{thm:asymp} and from Proposition \ref{prop:set-fce}.

iii) Cross-covariogram of functions $f_{1},f_{2}$
to the set $K$ is
\[
g_{f_{1}I_{K},f_{2}I_{K}}\left(x\right)=\intop_{K\cap K-x}f_{1}\left(y+x\right)\overline{f_{2}\left(y\right)}dy
\]
and corresponding isotropic cross-covariogram is
\[
\mathfrak{g}_{f_{1}I_{K},f_{2}I_{K}}\left(\left|x\right|\right)=\int_{\mathbf{S}\mathbf{O}_{d}}g_{M\left(f_{1}I_{K}\right),M\left(f_{2}I_{K}\right)}\left(x\right)dp\left(M\right),
\]
obviously 
\[
\mathfrak{g}_{f_{1}\pm f_{2}I_{K}}=\mathfrak{g}_{f_{1}I_{K}}\pm2\mathfrak{g}_{f_{1}I_{K},f_{2}I_{K}}+\mathfrak{g}_{f_{2}I_{K}}.
\]
From
\[
\int_{\partial^{*}K}\left|f_{1}\left(x\right)\pm f_{2}\left(x\right)\right|^{2}dH^{d-1}\left(x\right)=
\]
\[
=\int_{\partial^{*}K}\left|f_{1}\right|^{2}\left(x\right)dH^{d-1}\left(x\right)\pm
\]
\[
\pm2\mathrm{Re}\int_{\partial^{*}K}f_{1}\left(x\right)\overline{f_{2}\left(x\right)}dH^{d-1}\left(x\right)+
\]
\[
+\int_{\partial^{*}K}\left|f_{2}\right|^{2}\left(x\right)dH^{d-1}\left(x\right)
\]
and from Proposition \ref{prop:set-fce} follows 
\[
-d\kappa_{d}\mathfrak{g}_{f_{1}I_{K},f_{2}I_{K}}'\left(0+\right)=
\]
\[
=\kappa_{d-1}\mathrm{Re}\int_{\partial^{*}K}f_{1}\left(x\right)\overline{f_{2}\left(x\right)}dH^{d-1}\left(x\right),
\]
where $\partial^{*}K$ is set of points where the measure theoretic normal to set $K$ exists and $H^{d-1}\left(x\right)$
Hausdorff measure. The statement then follows from ii) applied to
$\left(f_{1}\pm f_{2}\right)I_{K}$. 
\end{proof}

%%%%%%%%%%%%%%%%%%%%%%%%%%%%%%%%%%%%%%%%%%%%%%%%%%%%%%%%%%%%%%%%%%%%%%%%%%%%%%%
\section{Results}
\subsection{Verification of variance formula and test of parameters}

General ellipsoid was measured with sevenfold grid with random orientation and position repeatedly. The variance of semiaxes length was calculated from measurement or estimated by Eq. \ref{eq:semivar}. Results presented in Tab. 1 show excellent performance of the error estimation formula.

\begin{center}
\begin{tabular}{|c|c|c|c|}
\hline 
Ellipsoid (n=100) & $s_{1}$ & $s_{2}$ & $s_{3}$\tabularnewline
\hline 
\hline 
Simulation  & 0.34 & 0.30 & 0.24\tabularnewline
\hline 
Theoretical value & 0.33 & 0.28 & 0.23\tabularnewline
\hline 
\end{tabular}
\par\end{center}

Table 1. \textit{Standard deviation of estimate of ellipsoid semiaxes ($s_{1}$=50, $s_{2}$=40, $s_{3}$=30, in arbitrary units) and mean values of standard deviation (calculated using Eq. \ref{eq:semivar}) are presented. Standard deviation of estimated standard deviation was 0.01 in all semiaxes. Estimate by Fakir sevenfold grid ($L_{V}$=0.01183) was repeated 100x with random grid orientation and position.}

Error of semiaxes length estimate in two selected samples is presented in Tab. 2. Precision of the method with selected parameters (grid type and $L_{V}$) and given objects (adult male phaesant forebrain and hatchling forebrain) is better than 0.5 percent.

\begin{center}
\begin{tabular}{|c|c|c|c|}
\hline 
Forebrain & $s_{1}$ & $s_{2}$ & $s_{3}$\tabularnewline
\hline 
\hline 
Adult & 12.16 (0.03) & 7.95 (0.02) & 6.48 (0.02)\tabularnewline
\hline 
Hatch. & 7.18 (0.03) & 5.21 (0.02) & 4.76 (0.02)\tabularnewline
\hline 
\end{tabular}
\par\end{center}

Table 2. \textit{Semiaxes estimate ($s_{1}\geq s_{2}\geq s_{3}$, in $mm$) of selected samples (forebrain of adult male and hatchling) by Fakir sevenfold grid ($L_{V}$ = 0.76 $mm^{-2}$) are presented with the estimated standard deviation (calculated using Eq. \ref{eq:semivar})
in parenthesis.}

\subsection{Analysis of the developing pheasant brain}
Heads of 2 hatchlings, 4 juvenile and 6 adult ring-necked pheasants (Phasianus colchicus) were fixed in formalin before scanning. The images of brain were acquired at high resolution (voxel volume = 0.002775 $mm^{3}$) using a 4.7 T magnetic resonance (MR) spectrometer (Bruker BioSpec) equipped with a commercially available resonator coil, and 3D Rapid Acquisition incorporating a Relaxation Enhancement (RARE) multi-spin echo sequence \citep{jirak:2015}.
MR images were analyzed in a home-made Fakir software.
To achieve high precision for calculation of volume and surface area of brain divisions, they were measured interactively by sevenfold Fakir probe with grid density 0.76 $mm^{-2}$ and the results can be found in \citep{jirak:2015}. Structures of avian brain were identified in histological atlas \citep{karten:2013}. 

\begin{center}
\begin{tabular}{|c|c|c|c|}
\hline 
 & Forebrain & Midbrain & Hindbrain\tabularnewline
\hline 
\hline 
Hatch. (2) & 0.30 (0.01) & 0.83 (0.01) & 0.39 (0.02)\tabularnewline
\hline 
Juv. (4) & 0.36 (0.01) & 0.85 (0.01) & 0.44 (0.01)\tabularnewline
\hline 
Adults (6) & 0.43 (0.01) & 0.83 (0.01) & 0.36 (0.02)\tabularnewline
\hline 
\end{tabular}
\par\end{center}

Table 3. \textit{Procrustes anisotropy PA of Pheasant brain compartments (hatchlings, juveniles and adults, number of samples is in parentheses) calculated according to Eq. \ref{eq:procr}. Anisotropy mean values are presented with the
standard error of the mean. The differences in forebrain and hindbrain
anisotropy between age groups are statistically significant (ANOVA
$p<0.01$).}

\begin{center}
\begin{tabular}{|c|c|c|c|}
\hline 
a) Forebrain & $s_{1}$ & $s_{2}$ & $s_{3}$\tabularnewline
\hline 
\hline 
Hatch. (2) & 7.0 (0.0) & 5.2 (0.1) & 4.5 (0.1)\tabularnewline
\hline 
Juv. (4) & 10.3 (0.2) & 7.2 (0.1) & 6.3 (0.1)\tabularnewline
\hline 
Adults (6) & 11.7 (0.1) & 7.8 (0.1) & 6.3 (0.1)\tabularnewline
\hline 
\end{tabular}
\par\end{center}

\begin{center}
\begin{tabular}{|c|c|c|c|}
\hline 
b) Midbrain & $s_{1}$ & $s_{2}$ & $s_{3}$\tabularnewline
\hline 
\hline 
Hatch. (2) & 8.4 (0.1) & 3.3 (0.1) & 2.2 (0.0)\tabularnewline
\hline 
Juv. (4) & 10.7 (0.1) & 4.1 (0.1) & 3.0 (0.1)\tabularnewline
\hline 
Adults (6) & 11.1 (0.2) & 4.1 (0.1) & 3.6 (0.1)\tabularnewline
\hline 
\end{tabular}
\par\end{center}

\begin{center}
\begin{tabular}{|c|c|c|c|}
\hline 
c) Hindbrain & $s_{1}$ & $s_{2}$ & $s_{3}$\tabularnewline
\hline 
\hline 
Hatch. (2) & 5.1 (0.0) & 3.4 (0.1) & 3.0 (0.2)\tabularnewline
\hline 
Juv. (4) & 7.4 (0.0) & 4.6 (0.1) & 4.2 (0.0)\tabularnewline
\hline 
Adults (6) & 7.5 (0.1) & 5.2 (0.2) & 4.5 (0.1)\tabularnewline
\hline 
\end{tabular}
\par\end{center}

Table 4. \textit{(a) forebrain (b) midbrain and (c) hindbrain of (hatchlings, juveniles and adults, number of samples is in parentheses) volume tensor.
Semiaxes lengths ($s_{1}\geq s_{2}\geq s_{3}$, in $mm$) of ellipsoids
calculated according to Eq. \ref{eq:semi}, mean values are presented with the
standard error of the mean.}

The semi-major axis is oriented laterally in forebrain and midbrain and rostrally in hindbrain. The significant increase of PA in forebrain (Tab. 3) means, that the increase in the length of semi-major axis oriented laterally is more pronounced that the length of other semi-axes (Tab. 4a), i.e. the change of relative width is the major change of the forebrain shape during the development. In the same manner we conclude that the rostral elongation of hindbrain significantly increases during brain development.
Calculating the variance of the semi-axes estimate for selected brain compartments and grid density 0.76 $mm^{-2}$ using Eq. \ref{eq:semivar} shows very high precision; the error of the estimate of the semi-axes for given grid density and objects size is less than 0.4 percent. 

\section{Discussion}
Semi-axes provide only the most basic object shape characteristics. On the other hand, the estimate of linear dimensions of a 3D object obtained by the volume tensor method may be both more precise and more robust compared to direct measurement because the latter depends on the selection of extreme points within the object that may be rather arbitrary.

Fakir probe enables estimation of volume tensor of single object from 3D data without its explicit segmentation by sparse systematic sampling that makes the method efficient. Prediction of the method precision follows classical works \citep{hlawka:1950,matheron:1965} but it had to be proved de novo using Wiener-Tauberian and geometric measure theory. We applied the special arrangement of line grids in 7 fold grid in this study to increase the precision of estimate of surface integrals in Eq. \ref{eq:covasymp} by negative covariance between estimates using sets of parallel lines in different directions.

The method might be preferred to surface based method \citep{schroederturk:2011} in situation when the automatic segmentation of the object is not feasible. The method allows measurement of shape of pheasant brain compartments efficiently and we were able to detect changes in the shape of phaesant forebrain and hindbrain during development (Tab 3,4), which may be of importance for interpretation of fossilized braincases of extinct birds and dinosaurs. Our approach using Fakir grid for volume measurement can be used in similar morphometric studies of macro- or microscopic objects.
 
The Fakir probe implemented in MS Visual C++ including calculation of semi-axes and precision can be downloaded from the authors webpage.

\section{Acknowledgment}
This work was supported by Czech Science Foundation grant No. P302/12/1207, by MEYS (project LM2015062 Czech-BioImaging, project CZ.02.1.01/0.0/0.0/16\_013/0001775 Moderni-zation and support of research activities of the national infrastructure for biological and medical imaging Czech-BioImaging funded by OP RDE) and MH CR-DRO (Institute for Clinical and Experimental Medicine IKEM, IN00023001).

\end{paper}

\end{document}